\DeclareMathOperator*{\argmax}{arg\,max}
\newcommand*{\Scale}[2][4]{\scalebox{#1}{$#2$}}%
\newtheorem{theorem}{Theorem}
\newtheorem{lemma}{Lemma}
\newtheorem{fact}{Fact}
\newtheorem{definition}{Definition}
\newtheorem{proposition}{Proposition}
\newtheorem{corollary}{Corollary}
\begin{document}
%
\title{A Two-Stage Auction Mechanism for Cloud Resource Allocation}

\author{Seyyedali Hosseinalipour,~\IEEEmembership{Student Member, IEEE,}
        Huaiyu Dai,~\IEEEmembership{Fellow, IEEE}
\IEEEcompsocitemizethanks{\IEEEcompsocthanksitem S. Hosseinalipour and H. Dai  are with the Department
of Electrical and Computer Engineering, North Carolina State University, Raleigh,
NC, USA.\protect\\
E-mail: shossei3@ncsu.edu, hdai@ncsu.edu
\IEEEcompsocthanksitem A preliminary version of this paper appeared in the proceeding of IEEE International Conference on Communications (ICC) 2017~\cite{ref:Mywork}.}}

\IEEEtitleabstractindextext{%
\begin{abstract}
The contemporary literature on cloud resource allocation is mostly focused on studying the interactions between customers and cloud managers. Nevertheless, the recent growth in the customers' demands and the emergence of private cloud providers (CPs) entice the cloud managers to rent extra resources from the CPs so as to handle their backlogged tasks and attract more customers. This also makes studying the interactions between the cloud managers and the CPs essential. In this paper, we investigate both of the mentioned interactions. For the interactions between customers and cloud managers, we adopt the options-based sequential auctions (OBSAs) to the cloud resource allocation paradigm. As compared to existing works, our framework can handle customers with heterogeneous demands, provide truthfulness as the dominant strategy, enjoy a simple winner determination, and preclude the delayed entrance issue. We also provide the performance analysis of the OBSAs, which is among the first in literature. For the interactions between cloud managers and CPs, we propose an auction-based scheme for resource gathering. Through incorporating the offered prices, we capture the heterogeneous desires of the CPs in leasing their resources. We conduct a comprehensive mathematical analysis of the two markets and identify the bidding strategy of the cloud managers.
\end{abstract}

\begin{IEEEkeywords}
Auction theory, cloud of clouds networks, sequential auctions, options-based sequential auctions, proxy agent, cloud resource allocation, Hamilton-Jacobi-Bellman equation, dynamic markets
\end{IEEEkeywords}}

\maketitle

\IEEEdisplaynontitleabstractindextext

%
\IEEEpeerreviewmaketitle

\ifCLASSOPTIONcompsoc
\IEEEraisesectionheading{\section{Introduction}\label{sec:introduction}}
\else
\section{Introduction}
\label{sec:introduction}
\fi

\vspace{-1mm}
\IEEEPARstart {M}{odern} society relies crucially on efficient processing of the massive amount of data collected from a variety of sources such as customers' information, wireless sensors, and statistical polls, for which cloud computing is a natural platform. Various cloud-based services are offered by different commercial companies such as Microsoft Azure~\cite{ref:azure}, Google Cloud~\cite{ref:google}, and Amazon EC2~\cite{ref:amazon}. Many companies are also anticipated to join this profitable market by offering
cloud services. The recent growth in the customers' demands has motivated the idea of sharing the resources in cloud networks~\cite{ref:interCloud}, where cloud owners can temporarily rent spare resources from one another to provide better services to the customers. It is anticipated that in the near future, large companies may dominate the entire cloud computing market by renting cloud resources from smaller or private companies. In that case, one of the most suitable candidates for modeling the corresponding cloud resource allocation may be the auction mechanism due to its simplicity, versatility, and a good match with the request and response paradigm in cloud networks. Recently, Amazon \textit{Spot Instances} is introduced as a simple auction-based framework for resource allocation, where users can bid for their requested cloud servers~\cite{ref:spotins}.
\subsection{Related works}\label{sub:related}

Auction theory provides a solid mathematical foundation for resource allocation among a set of resource-seeking customers and a set of resource providers. Hence, there exists a body of literature studying auction-based resource allocation in other contexts such as spectrum sharing in cognitive radio networks \cite{ref:cogRadio1,ref:cogRadio2,ref:cogRadio3,ref:cogRadio4,ref:cogRadio5}.

In modern cloud networks, cloud servers can be classified into different types according to their hardware and software configurations. Also, a bundle of cloud servers of different types may be required to meet the heterogeneous user demands simultaneously. Hence, earlier frameworks (e.g.,~\cite{ref:onetype}) that only consider one type of cloud servers and one type of tasks cannot well capture the reality of the market. On the other hand, cloud servers often switch between busy and idle states repeatedly and customers may join and leave the market at will. To capture this dynamism, it is more desirable to hold sequential auctions instead of a single-round auction. One simple approach is to hold a sequence of single-round auctions over time. However, as mentioned in \cite{ref:alggame}, single-round truthful auctions usually lose the truthfulness property when they are extended to sequential auctions. The truthfulness property ensures that customers cannot get higher rewards by manipulating their true valuations for the goods.
This consideration motivates us to go beyond the existing works on single-round auction (e.g., \cite{ref:nontruthfull1,ref:nontruthfull2,ref:oneround1,ref:oneround2}), and to seek truthful sequential auction solutions.

The most related works to ours are~\cite{ref:OneSide1,ref:onetypeOne,ref:bideveryround,ref:oneside2,DA}. In \cite{ref:OneSide1}, a novel bidding language is introduced based on categorizing the users into different groups with respect to their characteristics. Users are partitioned into three groups: job-oriented users, resource-aggressive users, and resource-aggressive users with time-invariant capacity requirements. A truthful online cloud auction mechanism is introduced on top of this bidding language. However, the original model only considers one type of the cloud servers. The authors have extended their proposed framework to the case with multiple types of tasks and servers in \cite{ref:onetypeOne}. However, the resulting model requires calculating a complex payment function for each arriving task and obtaining the allocation strategy by solving an optimization problem. These issues become a concern when handling real-time resource allocation in cloud networks with a large task arrival rate.

In order to model the multiple types of cloud servers and customers with heterogeneous demands, current literature has mainly focused on utilizing the combinatorial auctions for cloud resource allocation. Although combinatorial auctions can guarantee some favorable properties (such as truthfulness) in theory, it is well-known that determining the winner and its payment in combinatorial auctions is NP-hard, which renders them impractical in dynamic markets with real-time demands such as cloud networks. Also, these auctions are inherently designed for one-round selling. These issues of combinatorial auctions have promoted further research (e.g.\cite{ref:bideveryround,ref:oneside2}) on solving winner determination using simpler approximation methods or extending them to sequential combinatorial auctions. In \cite{ref:bideveryround}, the authors proposed a truthful mechanism for sequential combinatorial auctions. In this framework, besides complicated winner determination and payment identification process, when a user's task requires a bundle of cloud resources for more than one unit of time, the user has to bid in multiple rounds of auctions. This fact makes the framework inapplicable when users require uninterruptible processing of their tasks. In \cite{ref:oneside2}, the interactions between customers and cloud providers is modeled as an online combinatorial auction. The model of that work captures multiple types of cloud servers and heterogeneity of customers' demands. Also, it considers a sequential style of auction, in which winner determination is translated into a series
of one-round optimization problems. A truthful mechanism of selling is examined; and an approximate algorithm is proposed for one-round optimization. However, similar to \cite{ref:onetypeOne}, the need for solving multiple optimization problems using empirical methods in each round of the auction makes the framework complicated and computationally intensive. All of the aforementioned works and most of the contemporary literature focus on modeling the interactions between cloud managers and customers, whereas the resource gathering process for cloud managers is largely ignored. In the pioneering work~\cite{DA}, a general framework for intercloud networks is presented where the interactions between users and cloud providers is modeled by many-to-many auctions. Afterward, the interactions between cloud providers is modeled by a coalition game in which the cloud providers borrow resources from each other to fulfill their customers' demands. This work is among the first to consider the interactions both between the customers and cloud providers, and among the cloud providers themselves; unfortunately, users with bundle demands were not considered. Furthermore, one of the main challenges in dynamic cloud resource allocation scheme neglected in most of the mentioned works is the \textit{delayed entrance} problem. This problem arises when a user delays its entrance into the market when he has some side-information about the future dynamics of the market. Assume that in a sequential combinatorial auction, some users become aware that by waiting for some period of time, the cloud resources can be obtained at lower prices. In this scenario, all the users with side-information postpone their entrance into the market. In a large-scale market, this circumstance leads to a burst of arrival, and thus an unstable market. 

In summary, existing literature on modeling the interactions between the cloud managers and customers has at least one of the following four limitations: (i) incapability of handling customers' heterogeneous demands that require a bundle of different types of servers, (ii) missing the truthfulness property, (iii) requiring prohibitive computation for winner and payment determination, and (iv) susceptible to the delayed entrance issue. These issues will be addressed in our work. Also, to the best of our knowledge, our work is among the first to leverage auction theory to study the interactions among the public cloud managers and private cloud providers (CPs), which better captures the selfishness of the CPs considering their willingness to resource sharing captured by their \textit{offered prices}.

\subsection{Novelty and Contributions}
To address the limitations of the existing works mentioned in subsection~\ref{sub:related}, a novel two-stage auction framework is proposed in this work to capture the interactions among (a) customers and cloud managers, and (b) cloud managers and CPs. Specifically, we consider cloud of clouds networks $\mbox{\small (CCNs)}$ consisting of heterogeneous cloud servers and customers with different demands. There exists a $\mbox{\small CCN}$ manager in charge of handling the resources of each CCN. The CCN managers are interested in renting servers from CPs to enlarge their pool of resources so as to attract more customers and better handle their real-time demands. The first stage of the proposed framework is inspired by the options-based sequential auctions (OBSAs)~\cite{ref:OB} and models the interactions between customers and CCN managers, in which each customer endeavors to obtain his/her demanded resources from a CCN.\footnote{In this study, we consider the infrastructure as a service (IaaS) form of cloud computing.} To the best of our knowledge, we are (among) the first to leverage OBSAs to address the major limitations of existing works on dynamic cloud networks.\footnote{The truthfulness property is guaranteed in the second-price options-based sequential auctions.} In addition, we provided the corresponding performance analysis based on a novel Markov chain modeling, which is new to existing studies in the relevant literature. The second stage of the proposed framework describes the interactions between multiple CCN managers and multiple CPs, in which CCN managers compete to obtain resources from CPs.
For this stage, we introduce a novel model consisting of two parallel markets for gathering cloud resources: flat-price market and auction-based market, to better capture the selfishness of the private CPs (by incorporating offered prices) as compared to the existing models.\footnote{These markets may be viewed as the counterparts of the day-ahead and real-time markets in smart grids~\cite{powerDay}.} We also provide a comprehensive analysis for these markets using Hamilton-Jacobi-Bellman equation and derive the bidding strategy of the CCN managers with respect to their inherent characteristics in a stable market setting.

\textbf{Structure of the paper:} The system model is introduced in Section~\ref{sec:sysmodel}. The interactions between the customers and the CCN managers is modeled in Section~\ref{sec:interCCncust}. Section~\ref{sec:analysisOBSA} is devoted to the analysis of OBSAs. The interactions between the CPs and the CCN managers is modeled and analyzed in Section~\ref{sec:interacCCNandCPs}. Simulation results are presented in Section~\ref{sec:sim}. Finally, Section~\ref{sec:concl} concludes the paper and provides some possible future directions.
\section{System Model}~\label{sec:sysmodel}
\hspace{-2mm}A $\mbox{\small CCN}$ consists of multiple cloud servers with different processing capabilities; some of them are more desirable for GPU processing, while the others are more suitable for real-time database analysis and parallel processing. In addition to their core servers, CCNs can rent servers from CPs to process their backlogged tasks and to serve more customers. CPs are small cloud retailers who lease their extra computational resources to CCNs for profit.  Customers with multiple heterogeneous demands may join the CCN at will and require multiple types of servers simultaneously. Inspired by~\cite{ref:OB,DA}, proxy agents ($\mbox{\small PA}$) are incorporated into our model as trusted mediators between the customers and the corresponding CCN. Each customer sends its demands to an idle PA; subsequently, the PA attempts to fulfill the demands with the available resources of a CCN. Each $\mbox{\small CCN}$ operates under the control of a CCN manager who interacts with CPs and PAs.
\begin{figure}[t]
\includegraphics[width=3.4in, height=2in]{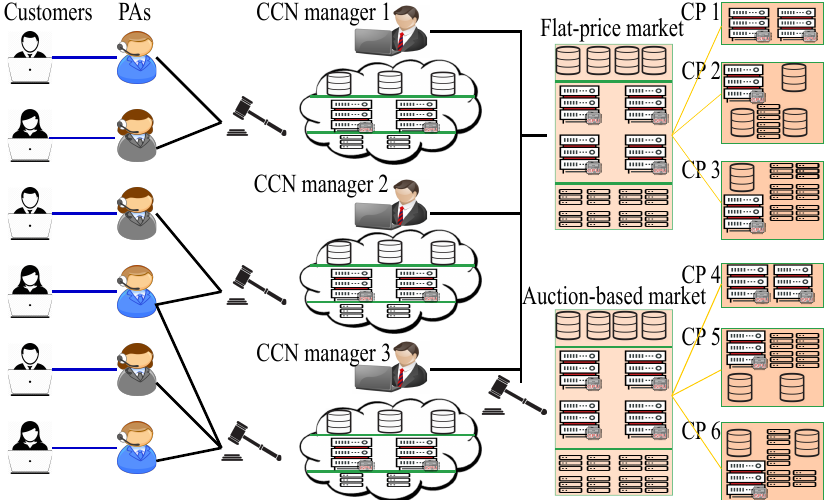}
\centering
\caption{Market model.}
\label{fig:1}
\end{figure}

Due to the variety in the task types and individual priorities, customers may have disparate preferences for different (combination of) servers, which is assumed known to their corresponding PAs. In this paradigm, the $\mbox{\small PAs}$ and the $\mbox{\small CCN}$ managers employ a common bidding language that reflects the customers' demands and valuations. Nevertheless, the discussion of the bidding language is beyond the scope of this paper. An interested reader is referred to~\cite{ref:18},~\cite{ref:22} and references therein for more details. 

In this work, we introduce a framework in which CCN managers rent extra servers from CPs by participating in one of the two parallel markets: the flat-price market and the auction-based market. In the flat-price market, the CPs offer their servers at a fixed price. In the auction-based market, CPs provide their servers along with their offered prices (i.e.,  the least expected price to lend the corresponding servers), where the CCN managers bid in a sequence of auctions to obtain the servers while satisfying the CPs' offered prices. The flat-price market is more suitable for leasing the servers with long idle periods. In this case, since a CP does not need to utilize the server in the near future, he aims to lease the server with a constant high price. However, the auction-based market is more favorable for servers with a short idle period. In this case, the CP may need its servers in a near future for itself. Hence, CPs compete with each other by offering lower prices of their servers so as to lease them faster. Similarly, a CCN manager who requires the resource immediately and needs to rent it for a long period tends to join the flat-price market, while the rest of the CCN managers participate in the auction-based market. As can be seen from Figure~\ref{fig:1}, the proposed model involves two stages for gathering and selling the resources. The first stage captures the interactions between the PAs and the CCN managers, while the second represents the interactions between the CCN managers and the CPs. In the following, we will introduce and analyze these two stages in order.

\section{Interactions between CCN managers and PAs: Options-based Sequential Auctions}\label{sec:interCCncust}
\noindent The main purpose of utilizing an auction is to sell goods when there is more than one interested buyers. In \textit{sequential auction}, the seller holds consecutive auctions for selling goods. Since the seller can adjust the time interval between the consecutive auctions, sequential auctions are suitable for the following scenarios: (i) availability of the goods varies over time, which means the goods may not be available in some of the time instances; (ii) the buyers arrive at the market at different times, which requires the seller to wait for some period of time before the number of buyers exceeds a threshold to guarantee a certain profit. Considering these facts, sequential auction is arguably the most suitable type of auctions for leasing the cloud servers to the PAs.

Classic first-price and second-price sequential auctions have been studied in the literature \cite{ref:seq1,ref:seq2}. Nevertheless, one of the main drawbacks of these auctions is the lack of a dominant strategy that can accommodate heterogeneous demands of buyers when customers face multiple sequential auctions. For example, consider the following two situations:  
\begin{enumerate}
\item A buyer with a limited budget and heterogeneous demands requires goods from either of the two sequential auctions but not from both.  
\item A buyer with a limited budget requires goods from both sequential auctions simultaneously.
\end{enumerate}
In both cases, a buyer has no dominant strategy for splitting the budget between multiple sequential auctions~\cite{ref:OB}. For the interactions between PAs and CCN managers, considering PAs as buyers, CCN managers as sellers, and each server as a good, holding a separate sequential auction for each type of servers will lead to the situations in the above-mentioned examples, which makes classic sequential auctions inapplicable. 
Another main concern of applying the classic sequential auctions and the combinatorial auctions to our context is the delayed entrance issue, where PAs may intentionally delay their entrance to the market for a better price. This phenomenon leads to an undesired burst of arrivals into the market and causes market instability. 

The aforementioned concerns suggests a need for an auction mechanism that can: (i) capture the dynamism of the market with its sequential style of holding; (ii) enjoy truthfulness as the dominant strategy to ensure that PAs cannot make more profit by manipulating their true valuations; and (iii) resolve the delayed entrance issue. To this end, we propose utilizing the options-based sequential auction (OBSA)~\cite{ref:OB} in our study. Besides enjoying all the above-motioned characteristics, OBSA leads to more trust between bidders and an auctioneer, and thus results in a higher long-term profit for the auctioneer. Moreover, OBSAs have simple implementation and admit fast winner recognition.
 
 Basically, OBSAs are classic sequential auctions reinforced with the options-based property. Consider a classic sequential auction with first- or second-price backbone in our context. In the first- (second-)price scenario, the PA with the highest bid is the winner who is then required to make a payment equal to his (the second highest) bid. Considering OBSAs in our context, the options-based property guarantees the least payment for winner PAs during their patience time, where the patience time is referred to as the time window in which the PA can wait before utilizing its obtained resources. Hence, this property eliminates the sensitivity of the PAs' payment to the time of winning an auction. In OBSAs, winner PAs are granted the opportunity to collect all of their demanded resources from the CCN resource pool before getting charged. The options-based property manifests itself through the price matching process, which is the main difference between OBSAs and classic sequential auctions. In~\cite{ref:OB}, OBSAs with second-price backbone are proposed without mathematical analysis. In this work, we introduce them to the cloud-related literature, adapt them to the cloud resource allocation scenario, and provide mathematical analysis identifying multiple performance metrics of interest. We also present the OBSAs with the first-price backbone, which builds the foundation of analysis for the OBSAs with the second-price backbone.

 \RestyleAlgo{boxruled}
 \begin{algorithm}[t]
 	\footnotesize
 	\caption{Price matching process for the first-price \newline options-based sequential auction}\label{alg:1}
 	\SetKwFunction{Union}{Union}\SetKwFunction{FindCompress}{FindCompress}
 	\SetKwInOut{Input}{input}\SetKwInOut{Output}{output}
 	\Input{Current price that the $\mbox{\small PA}$ has to pay when the current auction begins, entering and patience time of the winner $\mbox{\small PA}$ ($P_{cur}$,$t_{ent}$,$t_{pat}$), clock time ($T$), winner of the current auction's bid ($b_w$)}
 	\Output{Price that the $\mbox{\small PA}$ has to pay ($P_{out}(T)$)}
 	\BlankLine
 	$t \longleftarrow t_{ent}+t_{pat}$\\
 	$P_{out}(T) \longleftarrow P_{cur}$\\
 	\If{($t\geq T$ and $b_w<P_{cur}$)}{
 		$P_{cur} \longleftarrow b_w$\\
 	}
 	$P_{out}(T) \longleftarrow P_{cur}$\\
 \end{algorithm}\DecMargin{1em}
 \RestyleAlgo{boxruled}
 \begin{algorithm}[h]
 	\footnotesize
 	\caption{Price matching process for the second-price \newline options-based sequential auction}\label{alg:2}
 	\SetKwFunction{Union}{Union}\SetKwFunction{FindCompress}{FindCompress}
 	\SetKwInOut{Input}{input}\SetKwInOut{Output}{output}
 	\Input{Current price that the $\mbox{\small PA}$ has to pay when the current auction begins, entering and patience time of the winner $\mbox{\small PA}$ ($P_{cur}$,$t_{ent}$,$t_{pat}$), stored bumped $\mbox{\small PA}$'s ID in this $\mbox{\small PA}$'s memory ($ID_{mem}$), clock time ($T$), bid and ID of the current auction's winner ($b_w$,$ID_{w}$), bid and ID of the current auction's bumped PA ($b_{bumped}$,$ID_{bumped}$)}
 	\Output{Price that the $\mbox{\small PA}$ has to pay ($P_{out}(T)$)}
 	\BlankLine
 	$t \longleftarrow t_{ent}+t_{pat}$\\
 	$P_{out}(T) \longleftarrow P_{cur}$\\
 	\If{($t\geq T$)}{
 		\eIf{($ID_{mem} \neq Null$)}{
 			\eIf{$ID_{w} = ID_{mem}$ }{
 				\If{$b_{bumped}<P_{cur}$}{
 					$ID_{mem} \longleftarrow ID_{bumped}$\\
 					$P_{cur} \longleftarrow b_{bumped}$\\
 				}
 			}
 			{
 				\If{$b_w<P_{cur}$}{
 					$ID_{mem} \longleftarrow Null$\\
 					$P_{cur} \longleftarrow b_{w}$
 				}
 			}
 		}
 		{
 			\If{$b_{w}<P_{cur}$}{
 				$P_{cur} \longleftarrow b_w$
 			}
 		}
 	}
 	$P_{out}(T) \longleftarrow P_{cur}$\\
 \end{algorithm}\DecMargin{1em}

\subsection{First- and Second-Price OBSAs and PA's Role}\label{subsec:1}
  Without loss of generality, we investigate the interactions between multiple PAs and a CCN manager (for one CCN) in the following. Note that we consider sealed bidding procedure, in which $\mbox{\small PAs}$ do not communicate with each other and have no information about each others' bid. A CCN manager holds OBSAs with either the first-price or the second-price backbones for its possessed cloud servers, where a separate sequence of auctions is offered for each type of servers. The first- and second-price OBSAs and the PA's corresponding actions are discussed in the following:   

\textbf{1) Calculating the bid for each of the auctions of interest:} The $\mbox{\small PA}$ submits a bid equal to the customer's \textit{maximum marginal value} for suitable auctions~\cite{ref:OB}. By pursuing this bidding strategy, the $\mbox{\small PA}$ wins those types of servers which are potentially profitable for the corresponding customer.

\textbf{2) Obtaining the best price (price matching):} There are two modes of operation for any $\mbox{\small PA}$ in an auction: participant mode and observer mode. When a $\mbox{\small PA}$ enters a $\mbox{\small CCN}$, he becomes a participant and participates in the appropriate active auctions. From the moment that the $\mbox{\small PA}$ wins an auction, he switches to the observer mode for that auction. In this mode, the observer $\mbox{\small PA}$ reduces the price of a won server to a lower price using the following procedure:\footnote{It is assumed that an observer PA knows the identity and the bid of the winner PA and those of the PA with the second highest bid in the observed auction.} 

\textbf{A) OBSAs with the first-price backbone:} The observer $\mbox{\small PA}$ decreases his current payment to the winner's bid if the winner wins the auction with a lower price as compared to the PA's current payment. Otherwise, the observer PA does not make any changes to his current payment. The price matching process for the first-price OBSA is summarized in Algorithm~\ref{alg:1}.

\textbf{B) OBSAs with the second-price backbone:} In this case, each $\mbox{\small PA}$ has an identity that gets updated whenever he enters the market on behalf of a new customer. The winner $\mbox{\small PA}$ stores the identity of the PA who has proposed the second highest bid, i.e., the so-called bumped $\mbox{\small PA}$. Possible situations for the subsequent auction and the corresponding actions of the observer $\mbox{\small PA's}$ are as follows:

\textbf{B-1)  The bumped $\mbox{\small PA}$ wins the next auction:} The observer $\mbox{\small PA}$ decreases his current payment to the second highest bid of the next auction and updates his memory by saving the identity of the PA who has proposed this bid.\footnote{This is due to the fact that if the observer $\mbox{\small PA}$ would have postponed his entrance up to this auction, he would have to pay this price for this server.}

\textbf{B-2)  The bumped $\mbox{\small PA}$ stays at the market but loses the next auction:} This implies that the winner's bid is higher than the current payment of the observer $\mbox{\small PA}$. This is due to the fact that the true valuation of the bumped $\mbox{\small PA}$ is time-invariant and that truthfulness is a dominant strategy in the second-price OBSAs~\cite{ref:OB}. In this case, the observer $\mbox{\small PA}$ will neither change its memory nor its current payment.

\textbf{B-3) The bumped $\mbox{\small PA}$ leaves the market:} The observer $\mbox{\small PA}$ clears his memory and, from that point, he decreases his current payment to each of the successive winner's bid if it is lower than its current payment (same as the first-price backbone).\footnote{This is due to the fact that in this case, he could not have changed the market with delaying his entrance, and he would have to pay this price if he entered the market at the beginning of this auction.}

The price matching process for the second-price OBSAs is summarized in Algorithm~\ref{alg:2}.

\textbf{3) Utilizing the won servers:} At the end of the PA's patience time, among his obtained (won) servers, he chooses those maximizing the corresponding customer's utility. Mathematically, the $\mbox{\small PA}$ chooses a set of servers $s^*$ obtained as $s^* = \argmax \limits_{s \subseteq S}[v(s)-p(s)]$, where $S$ denotes the won servers by the $\mbox{\small PA}$ during his patience time, and $p(s)$, $v(s)$ denote the payment and the customer's valuation corresponding to the servers belonging to set $s$, respectively. All the other acquired servers by this $\mbox{\small PA}$ will be returned to the $\mbox{\small CNN}$ without any charge.

As can be observed, in OBSA, the winner determination is very simple in both first- and second-price backbones since the winner for each auction is always the PA with the highest bid. Also, payment calculation is fairly simple through observing the current winner bid. In contrast, combinatorial auction, proposed in contemporary cloud-related literature (e.g.,~\cite{ref:bideveryround,ref:oneside2}), requires solving the complex winner determination process and calculating complex payment functions for each PA. 

\vspace{-2mm}  
\section{Analysis of OBSAs}~\label{sec:analysisOBSA}
\vspace{-7mm} 
\subsection{Backgrounds}
\noindent The existence of the price matching process makes the analysis of OBSAs completely different from that of classic sequential auctions. In this section, mathematical analysis of OBSAs is presented for both first-price and second-price backbones. Without loss of generality, the analysis is performed for an OBSA for one type of servers. It is assumed that the $i^{th}$ PA's bid ($b_i$) takes a discrete value such that $b_i \in \left[v_{min},v_{max}\right]$, $\forall i$, where $v_{max}-v_{min}=K\delta$, $K\in \mathbb{Z}^+$, and $\delta$ is the quantization size of the bids. In this case, $v_{min}$ is the $\mbox{\small CCN}$ manager's lowest affordable price (corresponding to zero profit) and $v_{max}$ could be an upper bound obtained through market survey. Also, PAs' bids are assumed to be independent and identically distributed (i.i.d.) in each auction. It is further assumed that the average number of $\mbox{\small PAs}$ in the market follows a Poisson distribution with mean $\Phi$, where, at each round of auction, $\Omega$ portion of them participate in the auction. Hence, the number of $\mbox{\small PAs}$ for each auction follows a Poisson distribution with mean $\lambda=\Omega\Phi$.\footnote{This implies that the probability of having $k$ $\mbox{\small PAs}$ in an auction is $p(\# \mbox{\small PA}=k)=\frac{e^{-\lambda} \lambda^k}{k!}$.} Our following analysis hold for an arbitrary distribution of the PAs' bids; however, specific bid distributions are needed to obtain more concrete results for further insights.
For this purpose, two specific distributions for PAs' bids that are representative for a wide variety of markets  are considered below: 

\textbf{1) Uniform distribution:} Bids are uniformly distributed between $\left[v_{min},v_{max}\right]$ such that $\frac{v_{max}-v_{min}}{\delta}=  K_u$, where $\delta$ is the quantization size of the bids. In this case, the length of bid interval is $v_{max}-v_{min}= \delta K_u$.

\textbf{2) Sampled Laplace distribution:} Bids are sampled from a continuous Laplace distribution $Laplace(\mu,w)$, where the $\mu$ and $w$ are the so-called location and scale parameters, respectively. In this case, the original continuous probability density function (pdf) is given by $f(x|\mu,w)=\frac{1}{2 w} e^{-\frac{|x-\mu|}{w}}$. This continuous pdf is discretized with the step size $\delta$ such that $v_{min} = \mu - K_{L_1} \delta$ and $v_{max}= \mu + K_{L_2} \delta$, where $K_{L_1}, K_{L_2}$ are positive integers. In this case, the length of bid interval is $v_{max}-v_{min} = (K_{L_1}+K_{L_2}) \delta$.

We propose the sampled Laplace distribution to analyze a market in which the bids are concentrated around a specific value and become less probable elsewhere. We formally define this distribution and provide some related properties below.
\begin{definition}
A random variable $Y$ is defined as a \textbf{concatenated sampled (discretized) Laplace random variable} on the interval $[a,b]$ with the sampling step size of $\delta \in \mathbb{Z}>1$, and with the location and scale parameters $(\mu,w)$ if and only if the following conditions are satisfied:
\begin{itemize}
\item $\frac{b-a}{\delta}= K $, where $K \in \mathbb{Z}>1$.
\item $\mu = a + \beta \delta$, where $\beta \in \mathbb{Z}>1$.

\item $f_Y(a+k' \delta)=p_Y(Y=a+k' \delta)=\frac{1}{\Gamma(\delta,K,\beta,w)}\big(\frac{1}{2 w} e^{-\frac{|a+k' \delta-\mu|}{w}}\big)$, where $k' \in \mathbb{Z}, 0\leq k' \leq K$, and $\Gamma(\delta,K,\beta,w)$ is a normalization factor.
\end{itemize}
\end{definition}
\begin{lemma}\label{lem:1}
The normalization parameter $\Gamma(\delta,K,\beta,w)$ of a concatenated sampled (discretized) Laplace random variable with specified parameters is given by:
\begin{gather}
\Scale[0.99]{\Gamma(\delta,K,\beta,w) = 
	\frac{1}{2 w} e^{-\frac{ \delta \beta}{w}} \frac{1-(e^{\frac{\delta \beta}{w}})}{1- e^{\frac{\delta}{w}}} }\nonumber\\   
\Scale[0.99]{+\frac{1}{2 w} e^{-\frac{\delta}{w}} \frac{1-(e^{-\frac{\delta(K-\beta)}{w}})}{1- e^{-\frac{\delta}{w}}}+ \frac{1}{2w} .}
\end{gather}
\end{lemma}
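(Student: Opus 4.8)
The plan is to recognize that $\Gamma$ is fixed by the single requirement that the masses in the third condition of the definition form a valid probability distribution. Summing $p_Y(Y=a+k'\delta)$ over $k'=0,1,\dots,K$ and setting the total equal to one gives at once
\[
\Gamma(\delta,K,\beta,w)=\sum_{k'=0}^{K}\frac{1}{2w}\,e^{-\frac{|a+k'\delta-\mu|}{w}},
\]
so the entire lemma reduces to evaluating this finite sum in closed form.

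First I would clear the absolute value. Inserting $\mu=a+\beta\delta$ from the second condition yields $a+k'\delta-\mu=(k'-\beta)\delta$, hence $|a+k'\delta-\mu|=|k'-\beta|\,\delta$, so the summand depends on $k'$ only through $|k'-\beta|$. This motivates splitting the index range at $\beta$ into the three regimes $k'<\beta$, $k'=\beta$, and $k'>\beta$, where the middle one contributes exactly $\frac{1}{2w}$, matching the isolated term in the claim.

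Next I would evaluate the two remaining sums as geometric series in the common ratio $r=e^{-\delta/w}$. Reindexing the left regime by $j=\beta-k'$ gives $\sum_{j=1}^{\beta}r^{j}$, and reindexing the right regime by $m=k'-\beta$ gives $\sum_{m=1}^{K-\beta}r^{m}$; each closes via $\sum_{i=1}^{n}r^{i}=r\,\frac{1-r^{n}}{1-r}$. Substituting $r=e^{-\delta/w}$ and then rescaling the numerator and denominator of each quotient by a suitable power of $e^{\delta/w}$ rewrites the ratios with the positive exponents $e^{\delta\beta/w}$ and $e^{\delta/w}$ appearing in the statement; collecting the three pieces reproduces the displayed expression term by term.

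I do not anticipate a substantive obstacle, as this is purely a normalization computation. The only points that demand a little care are cosmetic: matching the precise algebraic shape of the stated ratios (handled by the rescaling noted above) and checking the boundary configurations where $\beta$ sits near an endpoint of $\{0,\dots,K\}$ so that one tail is short or empty. The latter is automatic, since the geometric-series formula remains valid for any number of terms $n\ge 0$ (yielding zero when $n=0$), so no separate case analysis is needed.
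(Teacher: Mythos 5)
Your proposal is correct and follows exactly the route the paper intends: the paper's own proof is a one-line remark that $\Gamma$ is determined by $\sum_{k'=0}^{K} f_Y(a+k'\delta)=1$, and your argument simply carries out that normalization computation in full (splitting at $k'=\beta$, summing the two geometric tails in $r=e^{-\delta/w}$, and rescaling to match the stated exponents, which indeed reproduces the displayed formula term by term). No discrepancy or gap; you have merely supplied the details the paper omits.
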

\begin{proof}
The proof can be carried out based on the fact that sum of the $f_Y(x)$ over all possible values of $x$ is $1$, i.e., $\sum_{k'=0}^{K}f_Y(a+k' \delta)=1$.
\end{proof}
\begin{fact}
The cumulative distribution  function (cdf) of a concatenated sampled (discretized) Laplace random variable on the interval $[a,b]$ with the sampling step size of $\delta \in \mathbb{Z}>1$, and with the location and scale parameters $(\mu,w)$ is given by:

	\begin{align*}\label{Eq:FL23}
	F_Y&(a+ k\delta)= p_Y(y\leq a+ k\delta) = \nonumber  \\ 
		\end{align*}
		\begin{align}\label{Eq:FL}
&	\begin{cases}
	0, \hspace{40mm} \textrm{if} \;k<0, \\
	\frac{1}{2w \Gamma(\delta,K,\beta,w)} e^{\frac{a-\mu}{w}} \frac{1-e^{\frac{\delta (k+1)}{w}}}{1-e^{\frac{\delta}{w}}}, \;\; \textrm{if} \;0\leq k<\beta,\\ 
	\frac{1}{2w \Gamma(\delta,K,\beta,w)} \big( e^{\frac{a-\mu}{w}} \frac{1-e^{\frac{\delta \beta}{w}}}{1-e^{\frac{\delta}{w}}} + 1 \big), \; \textrm{if} \; k=\beta,\\
	\frac{1}{2w \Gamma(\delta,K,\beta,w)}\big( e^{\frac{a-\mu}{w}} \frac{1-e^{\frac{\delta \beta}{w}}}{1-e^{\frac{\delta}{w}}} + 1 \\
	+ e^{-\frac{\delta}{w}}\frac{1-e^{-\frac{\delta (k-\beta)}{w}}}{1-e^{-\frac{\delta}{w}}}  \big ),  \; \hspace{13mm}\textrm{if} \; K \geq k>\beta, \\
	1, \hspace{39.5mm} \textrm{if} \; k> K.
	\end{cases}
	\end{align}
\end{fact}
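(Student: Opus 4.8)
The plan is to compute the cdf directly from its definition as a partial sum of the pmf, writing $F_Y(a+k\delta)=\sum_{k'=0}^{k} f_Y(a+k'\delta)$ for $0\le k\le K$, and then to match each branch of the claimed piecewise expression by evaluating the relevant geometric series (throughout I abbreviate $\Gamma(\delta,K,\beta,w)$ by $\Gamma$). The only genuine subtlety is that the pmf exponent carries the absolute value $|a+k'\delta-\mu|=|k'-\beta|\delta$, whose sign flips as the summation index $k'$ crosses the location index $\beta$; the case division in the statement is precisely dictated by this flip.

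First I would dispose of the two trivial branches. For $k<0$ the event $\{y\le a+k\delta\}$ lies below the support $[a,b]$, so $F_Y=0$; for $k>K$ the event contains the entire support and $F_Y=1$, which is also exactly the normalization $\sum_{k'=0}^{K} f_Y(a+k'\delta)=1$ underlying Lemma~\ref{lem:1}. Next, for the left region $0\le k<\beta$, every index obeys $k'\le k<\beta$, so $|a+k'\delta-\mu|=(\beta-k')\delta$ and $f_Y(a+k'\delta)=\frac{1}{2w\Gamma}e^{-(\beta-k')\delta/w}$. Factoring out $e^{-\beta\delta/w}=e^{(a-\mu)/w}$ leaves the geometric sum $\sum_{k'=0}^{k} e^{k'\delta/w}=\frac{1-e^{\delta(k+1)/w}}{1-e^{\delta/w}}$, reproducing the second branch; here I would remark that the common ratio $e^{\delta/w}>1$, so numerator and denominator are both negative and the closed form remains valid. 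For $k=\beta$ the only new term relative to $k=\beta-1$ is $f_Y(\mu)=\frac{1}{2w\Gamma}$ (the absolute value vanishes), which supplies the extra additive $1$ inside the parentheses of the third branch.

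Finally, for the right region $K\ge k>\beta$, I would start from the value already established at $k=\beta$ and append the terms $k'=\beta+1,\dots,k$, on which $|a+k'\delta-\mu|=(k'-\beta)\delta$. Reindexing by $j=k'-\beta$ gives $\sum_{j=1}^{k-\beta} e^{-j\delta/w}=e^{-\delta/w}\frac{1-e^{-\delta(k-\beta)/w}}{1-e^{-\delta/w}}$, which, added to the $k=\beta$ value, yields exactly the fourth branch. The main obstacle is purely mechanical: keeping the geometric-series bookkeeping and the sign conventions consistent across the peak at $k'=\beta$, and correctly identifying $e^{(a-\mu)/w}$ with $e^{-\beta\delta/w}$ at each stage. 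As a closing consistency check I would substitute $k=K$ into the fourth branch and insert the explicit $\Gamma$ from Lemma~\ref{lem:1}; the parenthesized expression then collapses to $2w\Gamma$, so $F_Y(a+K\delta)=1$, confirming agreement with the $k>K$ case and validating the derivation end to end.
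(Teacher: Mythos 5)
Your proof is correct and takes essentially the same route as the paper's: the paper likewise defines the cdf as the partial sum $\sum_{k'=0}^{k} f_Y(a+k'\delta)$ and then appeals to Lemma~\ref{lem:1} together with ``some mathematical manipulations,'' which are exactly the case split at the sign flip of $|a+k'\delta-\mu|$ at $k'=\beta$ and the geometric-series evaluations you carry out explicitly. Your closing check that the $K\geq k>\beta$ branch collapses to $1$ at $k=K$ (the bracket equalling $2w\Gamma$) is a worthwhile verification the paper omits, but it does not change the method.
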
	
\begin{proof}
The cdf can be defined as follows:
\vspace{-2mm}
	\begin{align}
p_Y(Y\leq a+ k\delta) = \sum_{k'=0}^{k} p_Y(Y=a+ k'\delta)\nonumber  \\ 
= \frac{1}{\Gamma(\delta,K,\beta,w)} \sum_{k'=0}^{k} {\frac{1}{2 w} e^{-\frac{|a+k'\delta-\mu|}{w}}}.
\end{align}
Applying the result of Lemma~\ref{lem:1} and some mathematical manipulations give us the results above.
\end{proof}
\textbf{Remark:} For a better illustration, some exemplary concatenated sampled Laplace distributions are depicted in Figure~\ref{diag:bids_dist2}.
\vspace{-1mm}
\begin{figure}[h]
	\includegraphics[width=3.5in, height=2in]{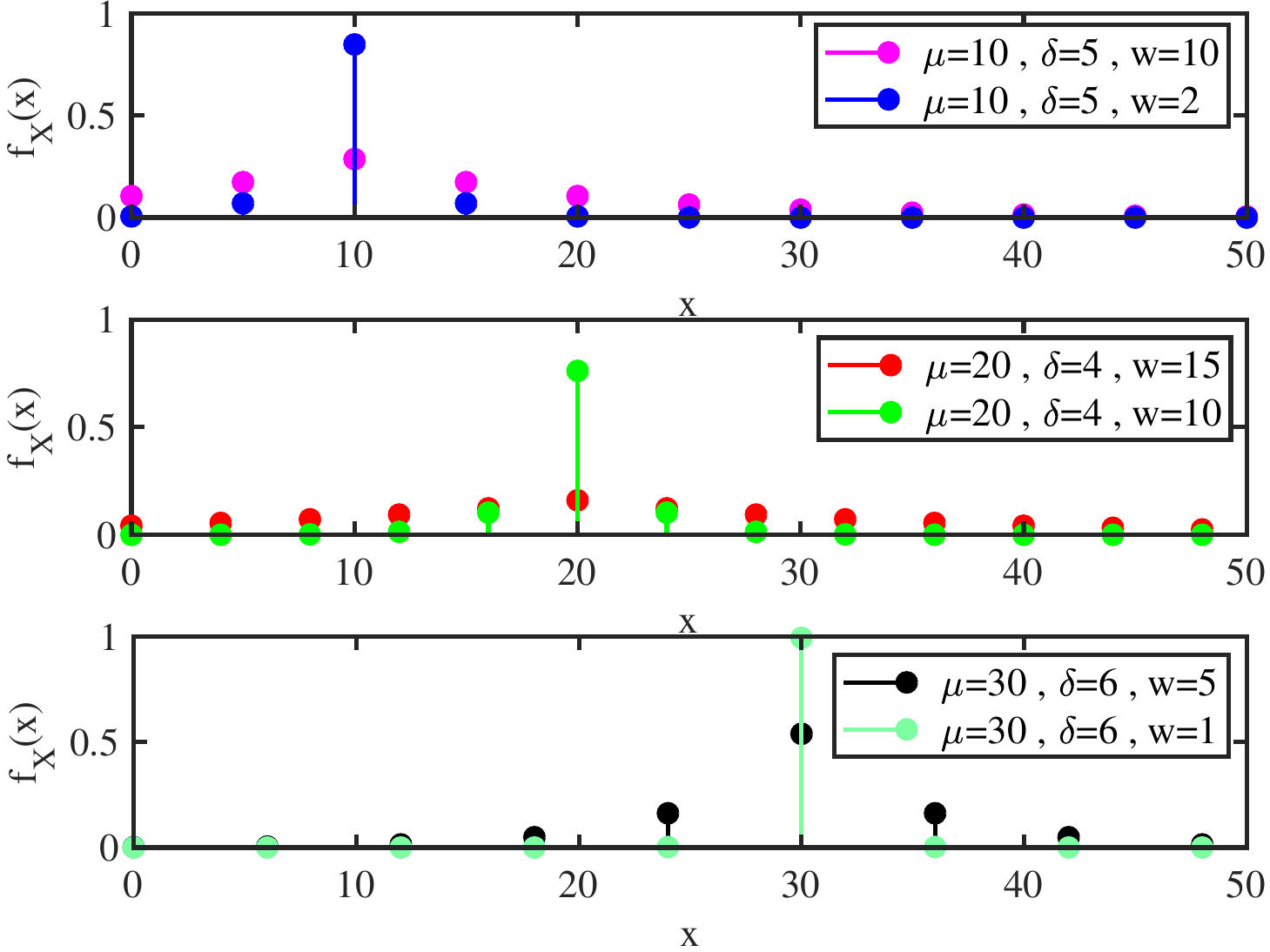}
	\centering
	\caption{Concatenated sampled Laplace distributions on the interval [0,50] with different parameters.}
	\label{diag:bids_dist2}
\end{figure}

\vspace{-1mm}
In the rest of this section, systematic analysis of the price matching process of OBSAs is examined. We start by presenting the analysis of the first-price OBSAs. Afterward, the extension to the second-price OBSAs is presented.
\vspace{-0.3mm}
\subsection{First-price OBSAs}
\vspace{-0.1mm}
\subsubsection{Modeling the price matching process}
We model the price matching process of an observer $\mbox{\small PA}$ by a discrete time homogeneous Markov chain ($\mbox{\small DTHMC}$), where the state space consists of the possible payments of the PA. Initially, a winner $\mbox{\small PA}$ enters into one of the states depending on his current payment. Afterward, this PA transits among different states of the DTHMC during his observation time, where his current state always represents his current payment during the price matching process. At the end of his observation time window, he leaves the DTHMC and makes a payment according to his last state. To carry out the analysis, we define the \textit{residual patience time} $\Delta$ as the time duration in which the observer $\mbox{\small PA}$ stays in the proposed DTHMC (i.e., the length of the observation mode).

Figure~\ref{fig:5} depicts the transition diagram of the proposed DTHMC, where $L=(v_{max}-v_{min})/ \delta$, $p_{ij}$ denotes the transition probability from state $i$ to state $j$, and $\theta_{j}$ denotes the \textit{stopping (leaving) probability} for state $j$. The leaving probabilities indicate quitting the price matching process upon having zero residual patience time and making a payment with respect to the last state. Also, $\pi^0_{j}$ denotes the \textit{entering probability} for state $j$. This parameter is defined based on the probability of entering the DTHMC states for a winner PA which depends on the initial bids. Considering a lower bound on the bids reported by the $\mbox{\small CCN}$ manager, all the states  of the DTHMC are transient except for the last state. Also, due to the price matching process, only transitions to lower prices (states with smaller indices) are possible.
\vspace{-4mm}
\begin{figure}[h]
\includegraphics[width=0.35\textwidth]{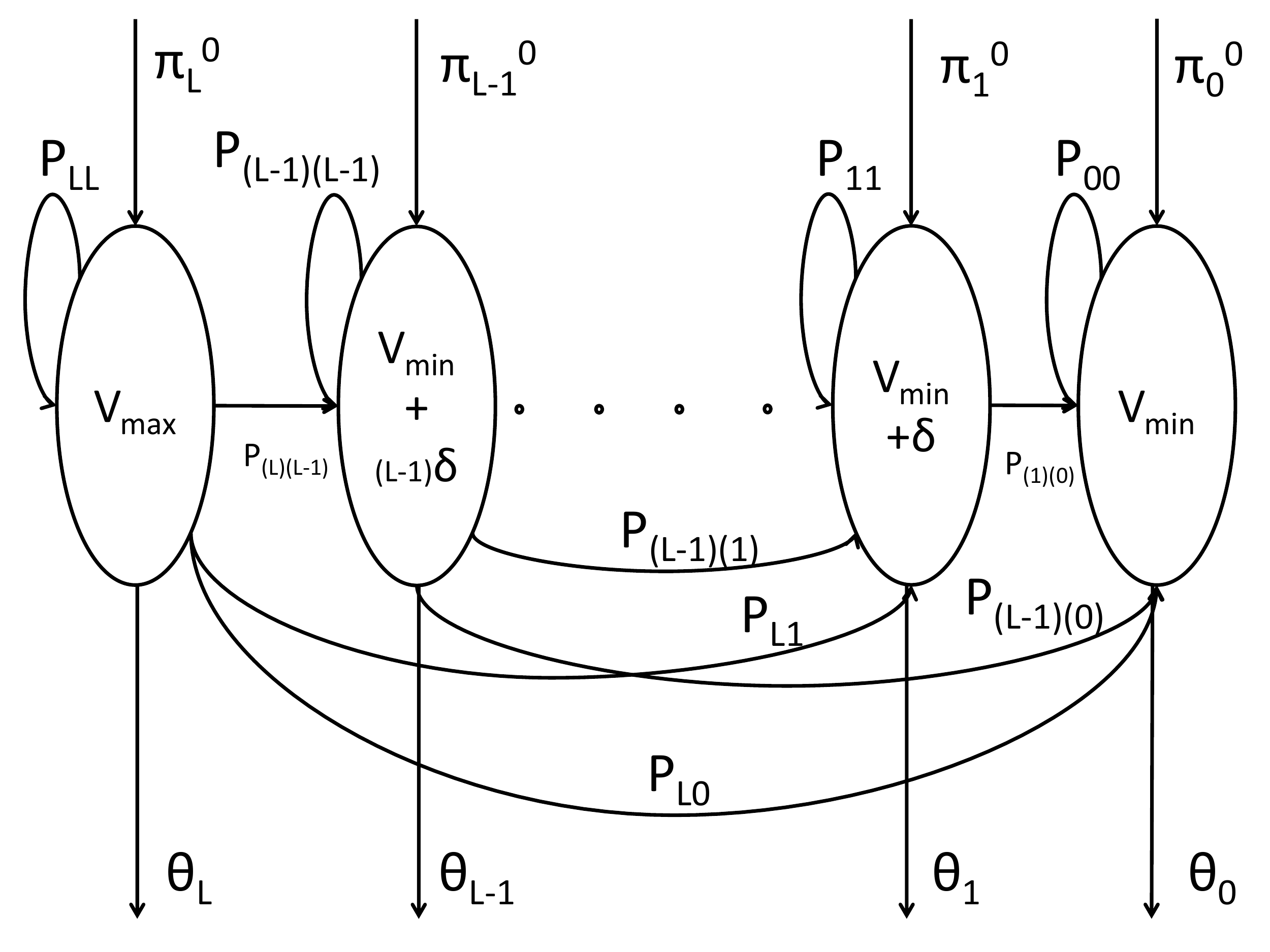}
\centering
\caption{Transition diagram for the price matching process in the first-price OBSA.}
\label{fig:5}
\end{figure}

\vspace{-2mm}
In the first-price OBSAs, for an observer $\mbox{\small PA}$ with current state (current payment) $x$, a transition to another state only occurs when another $\mbox{\small PA}$ wins the observed auction with a bid less than $x$. Hence, transition probabilities for the modeled $\mbox{\small DTHMC}$ can be derived as follows:\footnote{Note that all the derived transition probabilities and entering probabilities depend on the number of available PAs in the market which is a Poisson random variable as defined before.}
\begin{gather}\label{eq:p}
\Scale[0.90]{p_{xy}= \sum_{N=1}^{\infty} p\left(max(b_1,\cdots ,b_N)=v_{min}+y \delta \right) p(\# \mbox{\small PA}=N) } \nonumber\\  
\;\; \Scale[0.95]{,  L \geq x>y\geq 0 ,}\\ 
\Scale[0.90]{p_{kk}=\sum_{N=1}^{\infty}p\left(max(b_1,\cdots ,b_N)\geq v_{min}+k \delta \right) p(\# \mbox{\small PA}=N) }\nonumber\\  
 \;\;\Scale[0.95]{, L \geq k\geq 0,} 
\end{gather}
where $N$ is the number of participant $\mbox{\small PAs}$ in the auction. The following expressions for the transition probabilities can be derived readily.
\begin{itemize}
\item \textbf{Uniform bids}:
\begin{equation}\label{eq:cite1}
\begin{aligned}
p_{xy}&=\sum_{N=1}^{\infty}  \frac{(y+1)^N-y^N}{(L+1)^N} \left(\frac{e^{- \lambda} \lambda ^N}{N!}\right)\\
&= e^{- \lambda}\left[e^{\frac{(y+1)\lambda}{L+1}}-e^{\frac{y \lambda}{L+1}}\right].
\end{aligned}
\end{equation}
\item \textbf{Concatenated sampled Laplace bids}: 

\begin{align}\label{eq:olgooo}
p_{xy}&= \sum_{N=1}^{\infty} \frac{e^{- \lambda} \lambda ^N}{N!}\left(\frac{1}{2w \Gamma(\delta,K,\beta,w)}\right)^N \nonumber \\
&\Bigg[ \left(\sum_{k'=0}^{y} {e^{-\frac{|v_{min}+k'\delta-\mu|}{w}}}\right)^N - \nonumber \\
& \left(\sum_{k'=0}^{y-1} {e^{-\frac{|v_{min}+k'\delta-\mu|}{w}}}\right)^N \Bigg]\nonumber \\
 &= e^{-\lambda}[e^{\frac{\lambda F_Y(v_{min}+y\delta)}{2w \Gamma(\delta,K,\beta,w)}}-e^{\frac{\lambda F_Y(v_{min}+(y-1)\delta)}{2w \Gamma(\delta,K,\beta,w)}}].
\end{align}
This expression can be further expanded using~\eqref{Eq:FL}.
\end{itemize}

The winner PA enters a state depending on his initial bid. Therefore, the entering probabilities for the $\mbox{\small DTHMC}$ can be defined as:
\begin{gather}
\pi _k^0=p\left(max(b_1,\cdots ,b_N)= v_{min}+k\delta \right)p(\# \mbox{\small PA}=N)\\ \nonumber
 =p_{xk},  
 \;\; L \geq x > k\geq 0.
\end{gather}

Using the transition probabilities, the \textit{one-step transition matrix}
$P=[p_{ij}]_{0\leq i,j\leq L}$, which fully describes the characteristics of the proposed DTHMC, can be readily built. Note that $P$ is a lower triangular matrix since the price
matching process only allows transitions toward lower states. 
\subsubsection{Expected income of the $\mbox{\small CCN}$ manager in the first-price OBSAs}\label{subsec:fp}

\noindent One of the important performance metrics of any auction is the auctioneer's income or profit. The $\mbox{\small CCN}$ manager's expected income is the same as the $\mbox{\small PAs}$' expected payments. Due to the construction mechanism of our proposed DTHMC, calculation of the leaving (stopping) probabilities is the prerequisite for obtaining the expected $\mbox{\small PA}$'s payment.
\begin{proposition} \label{th:1} For an arbitrary observer $\mbox{\small PA}$ with an arbitrary residual patience time $\Delta$, the conditional leaving probabilities $\theta^i _j=p( \textrm{Ending at $j^{th}$ state given starting from the $i^{th}$ state})$ can be obtained as follows:
\vspace{-3mm}
\begin{gather}\label{eq:PE}
  \theta^i _j=(P^{\Delta})_{ij}=  \bigg[ \sum_{k=1}^{\Delta}p_{ij}p_{jj}^{\Delta - k}p_{ii}^{k-1}\\ \nonumber
 + \sum_{k=0}^{\Delta-2} \sum_{c=0}^{\Delta-2-k} \sum_{r=j+1}^{i-1} p_{ir}p_{rj}p_{jj}^{\Delta-k-2-c}p_{i,i}^k p_{r,r}^c  \nonumber
 + \cdots \bigg], \nonumber
\end{gather}
where $P^{\Delta}$ denotes the matrix $P$ to the power of $\Delta$.
\end{proposition}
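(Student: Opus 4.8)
The plan is to prove the two assertions in the statement separately: first the identity $\theta^i_j = (P^\Delta)_{ij}$, and then its explicit combinatorial expansion as a sum over paths.

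First I would establish $\theta^i_j = (P^\Delta)_{ij}$. By construction, the residual patience time $\Delta$ is exactly the number of auction rounds the observer PA spends in the DTHMC before leaving and paying according to its final state. Writing $X_t$ for the PA's state (current payment) after $t$ rounds, homogeneity and the Markov property give the Chapman--Kolmogorov relation $p(X_\Delta = j \mid X_0 = i) = (P^\Delta)_{ij}$. Since the PA departs precisely at round $\Delta$ and is charged according to $X_\Delta$, the event ``ending at state $j$ given starting at state $i$'' coincides with $\{X_\Delta = j \mid X_0 = i\}$, so $\theta^i_j = (P^\Delta)_{ij}$. This is the conceptual core, and it is immediate once the residual patience time is identified with the number of steps.

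Second I would expand $(P^\Delta)_{ij}$ by enumerating the length-$\Delta$ paths from $i$ to $j$. Written out, $(P^\Delta)_{ij} = \sum p_{i r_1} p_{r_1 r_2} \cdots p_{r_{\Delta-1} j}$ over all intermediate indices. The key simplification is that $P$ is lower triangular (the price-matching process permits only transitions to states of equal or smaller index), so every contributing path is a nonincreasing sequence of states, determined by its strictly decreasing skeleton $i = s_0 > s_1 > \cdots > s_m = j$ together with the self-loops inserted at each visited state. I would group the path sum by $m$, the number of strictly downward transitions. For $m=1$ the skeleton is just $i \to j$, and distributing the $\Delta-1$ self-loops between state $i$ (before the drop) and state $j$ (after it) yields $\sum_{k=1}^{\Delta} p_{ij}\, p_{jj}^{\Delta-k} p_{ii}^{k-1}$, the first term. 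For $m=2$ the skeleton is $i \to r \to j$ with $j < r < i$; summing over $r$ and over the allocation of the remaining $\Delta-2$ self-loops among states $i$, $r$, $j$ produces the triple sum. The ellipsis collects the analogous contributions from $m \geq 3$.

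The main obstacle is purely the bookkeeping of the self-loop allocation and matching the stated index ranges exactly. For a fixed skeleton with $m$ drops, the $\Delta-m$ self-loops are distributed as nonnegative integers $a_0, \ldots, a_m$ with $\sum_{l} a_l = \Delta - m$, contributing $\prod_{l=0}^{m-1} p_{s_l s_{l+1}} \prod_{l=0}^{m} p_{s_l s_l}^{a_l}$. I would then verify that specializing to $m = 1, 2$ reproduces the summation limits in the statement: the substitution $k = a_0 + 1$ accounts for the ``$k-1$'' exponent in the first term, while $k = a_0$, $c = a_1$ with $a_2 = \Delta-2-k-c$ forces the ranges $0 \le k \le \Delta-2$ and $0 \le c \le \Delta-2-k$ in the second. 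The lower-triangular structure is exactly what guarantees that no upward transitions are erroneously counted and that the intermediate index satisfies $j < r < i$, so the path enumeration is both exhaustive and non-overlapping.
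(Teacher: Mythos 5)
Your proof is correct and takes essentially the same approach as the paper: the paper's own proof is a two-line sketch invoking the properties of the one-step transition matrix (your Chapman--Kolmogorov step) and noting that each term in the expansion corresponds to a distinct transition scenario read off the lower-triangular transition diagram (your skeleton-plus-self-loops path enumeration). You simply supply in full the bookkeeping that the paper leaves implicit, including the correct matching of the index ranges for the $m=1$ and $m=2$ terms.
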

\begin{proof}
The result of this proposition is based on the characteristics of the one step transition matrix of a Markov chain. Note that each of the terms in \eqref{eq:PE} corresponds to a different scenario of transitions before leaving the DTHMC, and it can be derived directly by inspecting Figure~\ref{fig:5}.
\end{proof}
The following corollaries follow readily from the above proposition.
\begin{corollary}
Considering the scenario described in Proposition~\ref{th:1}, the $\mbox{\small CCN}$ manager's expected revenue is given by:
\begin{equation}\label{eq:rev}
 E[\textrm{Revenue}]=\sum_{i=0}^{L}\sum_{j=0}^{i} j\theta^i_j \pi^0_i=\sum_{i=1}^{L} \sum_{j=1}^{i} j \left(P^{\Delta}\right)_{ij}  \pi^0_{i}.
\end{equation}
When calculating the expected income, $j$ should be replaced by $j+v_{min}$ in the above equation. 
\end{corollary}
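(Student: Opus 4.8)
The plan is to derive \eqref{eq:rev} as a one-line application of the law of total expectation, using only the entering probabilities $\pi^0_i$ and the conditional leaving probabilities $\theta^i_j$ supplied by Proposition~\ref{th:1}. As noted at the start of Section~\ref{subsec:fp}, the CCN manager's expected income equals the expected payment of a generic winner PA, since every unit collected by the manager is exactly what some PA pays. Hence it suffices to compute the expected terminal value of a single observer PA traversing the DTHMC.

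First I would isolate the two-stage randomness that determines a PA's payment. A winner PA enters the chain in state $i$ with probability $\pi^0_i$, an event fixed by his winning bid; conditioned on this, after $\Delta$ transitions he exits from some state $j$ with probability $\theta^i_j$, and his final payment is the value attached to the terminal state $j$ --- equal to $j$ when measured as revenue relative to the floor $v_{min}$, and to $j+v_{min}$ as absolute income. Thus the payment is a deterministic function of the terminal state alone, and the pair (entering state, leaving state) captures all the randomness relevant to the payment.

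Next I would combine these ingredients through the definition of conditional probability: the joint probability of entering at $i$ and leaving from $j$ factors as $\pi^0_i\,\theta^i_j$, so summing the terminal value against this joint law yields
\begin{equation*}
E[\textrm{Revenue}] = \sum_{i=0}^{L}\sum_{j=0}^{i} j\,\theta^i_j\,\pi^0_i .
\end{equation*}
The inner range $0\le j\le i$ is forced by the price-matching dynamics, which permit transitions only toward lower-indexed states (equivalently, $P$ is lower triangular); starting from $i$ the chain can never reach a state above $i$. Substituting $\theta^i_j=(P^{\Delta})_{ij}$ from Proposition~\ref{th:1} converts this into the matrix form, and replacing $j$ by $j+v_{min}$ turns revenue into absolute income, exactly as claimed.

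Finally I would account for the shift of both summation indices from $0$ to $1$ in the second expression of \eqref{eq:rev}: each $j=0$ term vanishes because the summand carries the factor $j$, while the $i=0$ term forces $j=0$ and hence also vanishes, so these terms may be discarded without changing the value. I do not anticipate a genuine obstacle here, as the whole argument is bookkeeping in total expectation; the only points demanding care are keeping straight the correspondence between a state index and its associated payment level, and respecting the downward-only support $j\le i$ inherited from the triangular structure of $P$.
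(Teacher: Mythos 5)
Your proof is correct and is essentially the argument the paper intends: the paper offers no separate proof, stating only that the corollary ``follows readily'' from Proposition~\ref{th:1}, and your law-of-total-expectation bookkeeping over the joint law $\pi^0_i\,\theta^i_j$ of entering/leaving states, together with the substitution $\theta^i_j=(P^{\Delta})_{ij}$ and the triangularity justification for $j\le i$ and the dropped zero-index terms, is exactly that intended derivation.
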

To avoid the burdensome of matrix multiplications needed in the above corollary, lower bounds of the CCN manager's income can be used as explained in the following corollary. 
\begin{corollary} 
For the described scenario in Proposition~\ref{th:1}, the following expression can be obtained as a lower bound of the  $\mbox{\small CCN}$ manager's expected revenue:\
\vspace{-4mm}
\begin{gather}\label{eq:rev1}
\hspace{-3mm}\Scale[0.86]{ E[\textrm{Revenue}]\geq \sum_{j=0}^{L} \pi^0_j p_{jj}^{\Delta} +  \sum_{i=0}^{L} \sum_{j=0}^{i} j p_{ij} p_{jj}^{\Delta-1} \frac{1-\left(\frac{p_{ii}}{p_{jj}}\right)^{\Delta}}{1-\left(\frac{P_{ii}}{p_{jj}}\right)} \pi _{i}^{0}.}
\end{gather}
Also, obtaining better lower bounds by involving more higher-order terms is straightforward.
\end{corollary}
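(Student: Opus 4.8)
The plan is to start from the exact revenue expression \eqref{eq:rev} and obtain the bound by truncating the path expansion of $(P^\Delta)_{ij}$ provided by Proposition~\ref{th:1}. The structural fact that makes this work is that every summand in \eqref{eq:PE} is a product of transition probabilities, hence non-negative, and that the revenue weights $j$ are non-negative as well. Therefore discarding any collection of terms from the expansion of each $(P^\Delta)_{ij}$ can only decrease $\sum_{i}\sum_{j} j\,(P^\Delta)_{ij}\,\pi^0_i$, and any such truncation yields a valid lower bound on the expected revenue. The stated bound corresponds to keeping only the two lowest-order families of paths, namely those that make no transition and those that make exactly one transition during the residual patience time.

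First I would isolate the zero-transition contribution. Since $P$ is lower triangular and the price-matching process admits no upward transitions, the only length-$\Delta$ path from a state $i$ back to $i$ is the one that remains at $i$ throughout, so $(P^\Delta)_{ii}=p_{ii}^{\Delta}$ exactly; summing these staying-put contributions accounts for the first sum in the bound. Next I would evaluate the single-transition contribution: for $j<i$ it consists of the paths that dwell at $i$ for $k-1$ steps, jump directly to $j$ at step $k$, and remain at $j$ for the remaining $\Delta-k$ steps, summed over $k=1,\dots,\Delta$, which is precisely the first sum appearing in \eqref{eq:PE}. Factoring out $p_{ij}$ and reindexing with $m=k-1$ turns the remaining sum into the finite geometric series $\sum_{m=0}^{\Delta-1}(p_{ii}/p_{jj})^{m}$, whose closed form $p_{jj}^{\Delta-1}\,\frac{1-(p_{ii}/p_{jj})^{\Delta}}{1-(p_{ii}/p_{jj})}$ is exactly the kernel multiplying $j\,p_{ij}\,\pi^0_i$ in the second sum of the bound.

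Assembling the two families and weighting by the revenue coefficients gives the claimed inequality, where the inequality (rather than equality) is due solely to the discarded multi-transition paths — the triple sum and all higher-order terms in \eqref{eq:PE} — which are non-negative and hence only strengthen the bound. The one point requiring care, and which I expect to be the main obstacle, is the diagonal case $j=i$: there the ratio $p_{ii}/p_{jj}$ equals one and the geometric closed form develops a removable $0/0$ singularity, so the $j=i$ contribution cannot be read off from the single-jump formula and must be handled separately as the staying-put term $p_{ii}^{\Delta}$. Once this separation is made explicit the two families combine cleanly into the stated expression, and I would conclude by noting that reinstating the two-jump paths (the triple sum in \eqref{eq:PE}) and evaluating their nested geometric sums in the same fashion produces strictly tighter lower bounds, which justifies the final remark of the corollary.
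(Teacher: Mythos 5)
Your proof is correct and takes essentially the same route as the paper's own (implicit) argument: the corollary is presented as following ``readily'' from Proposition~\ref{th:1}, and the intended derivation is exactly your truncation of the non-negative path expansion of $(P^{\Delta})_{ij}$ --- keep the zero-jump and one-jump families, sum the finite geometric series $\sum_{m=0}^{\Delta-1}(p_{ii}/p_{jj})^{m}$ in closed form, and discard the remaining non-negative higher-order terms. Your separate treatment of the diagonal case $j=i$ (where the geometric kernel degenerates to $0/0$ and the correct contribution is $p_{ii}^{\Delta}$) is also right; note only that the weighted staying-put term $\sum_{j} j\,\pi^0_j p_{jj}^{\Delta}$ your decomposition produces differs from the paper's unweighted first sum $\sum_{j} \pi^0_j p_{jj}^{\Delta}$, which appears to be a typographical slip in the statement rather than a defect in your argument.
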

\subsection{Second-price OBSAs}
\subsubsection{Modeling the price matching process}
Considering the price matching process of second-price OBSAs presented in subsection~\ref{subsec:1}, it can be seen that as long as the bumped $\mbox{\small PA}$ stays in the market, the observer $\mbox{\small PA}$ will only change his current payment when the bumped $\mbox{\small PA}$ wins one of the subsequent auctions. This is because if the bumped $\mbox{\small PA}$ stays in the market and does not win the subsequent auctions, the winner PA's bid will be higher than the observer $\mbox{\small PA}$'s current payment. Nevertheless, when the bumped $\mbox{\small PA}$ leaves the market, the observer $\mbox{\small PA}$ clears his memory and adapts his payment to the winner PA's bid at each of the subsequent auctions.

Similar to the first-price OBSAs, Figure~\ref{fig:6} depicts the transition diagram of our proposed DTHMC corresponding to the actions of an observer PA in the second-price OBSAs. The transition diagram is composed of two main components: a primary Markov chain and multiple sub-Markov chains. The primary Markov chain describes the case in which the bumped PA stays in the market, while the sub-Markov chains capture the actions upon leaving the bumped PA. Each $sub^n_m$ in Figure~\ref{fig:6} refers to the case of leaving from the $m^{th}$ state of the primary Markov chain and entering a sub-Markov chain as depicted in Figure~\ref{fig:5} with an initial state $n$ ($\pi^0_n=1$), with $n\leq m$. This corresponds to the case in which the current payment of the observer PA is $m$ and in the next auction the bumped PA leaves the market and the observer PA's current payment becomes $n$ with regard to the winner's PA bid in the subsequent auction. Each vector $\underline{Z_{m}}=[z_{m0},z_{m1},z_{mm},0, \cdots,0]$ denotes a ($1\times (L+1)$) row vector, where the $z_{mj}$ is the probability of leaving the primary Markov chain at state $m$ and entering the corresponding sub-Markov chain with starting state $j$ ($\pi^0_j=1$). Note that each departure branch denoted by $\underline{Z}_m, 0\leq m \leq L$, accounts for the $m+1$ possible transitions to the sub-Markov chains, which are grouped in Figure~\ref{fig:6} in the interest of space. In this diagram, $q_{ij}$ denotes the transition probability between state $i$ and $j$, which occurs when the bumped $\mbox{\small PA}$ stays in the market and the observer $\mbox{\small PA}$ reduces its payment from ($v_{min}+i\delta$) to the second highest bid in the subsequent auction ($v_{min}+j\delta$) (if it is less than its current payment).\footnote{In this case, to avoid confusion with the first-price OBSAs, the one-step transition matrix for the primary Markov chain is denoted by $Q$.} It is assumed that the bumped $\mbox{\small PA}$ participates in the next auction with probability $p(B)$, which is modeled as a complementary cdf of a memoryless distribution (e.g., exponential distribution) for the ease of analysis.\footnote{The memoryless property is considered so as to have a constant probability of leaving for the bumped PA between consecutive auctions.}
\begin{figure}[h]
\includegraphics[width=0.35\textwidth]{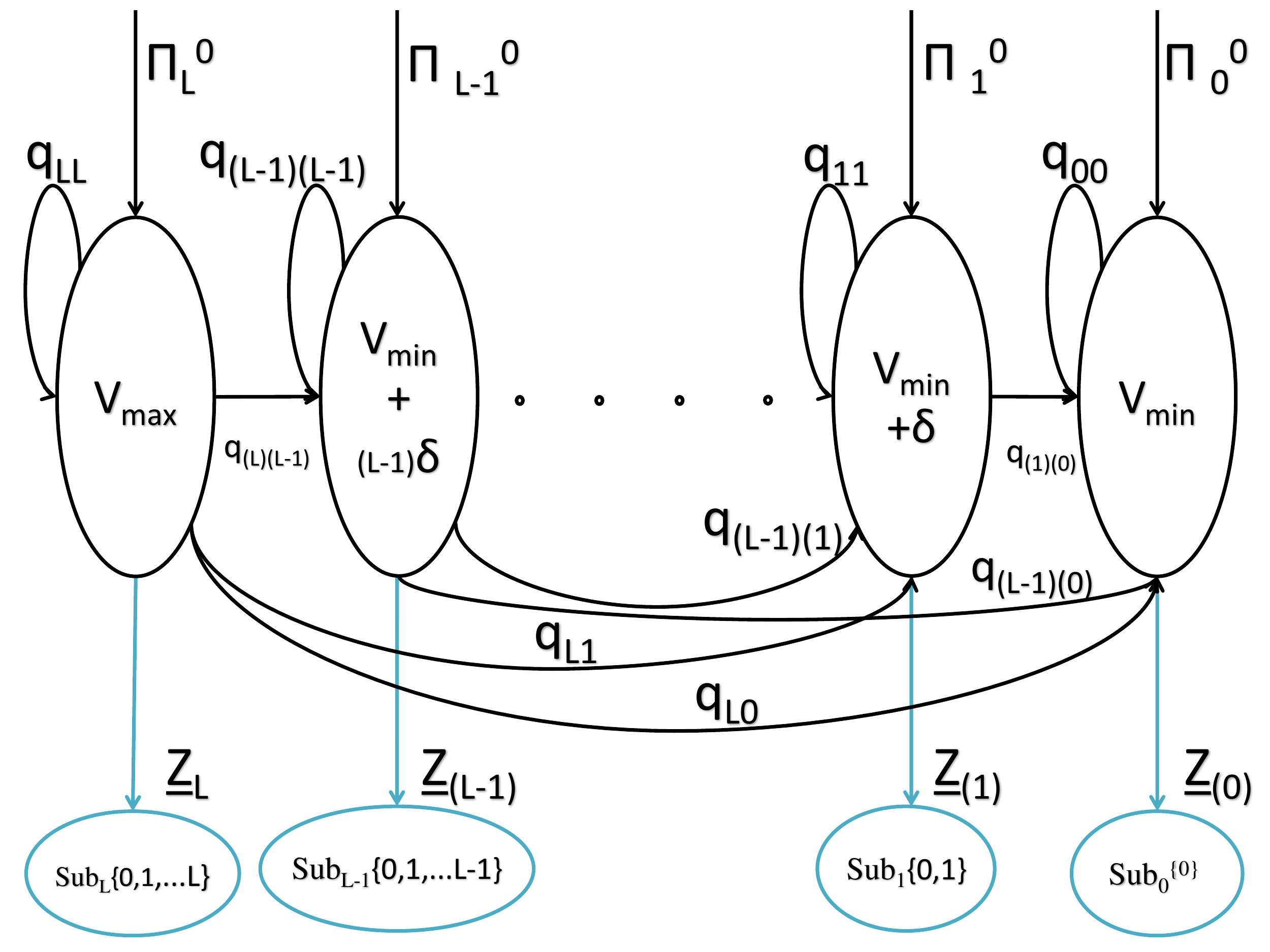}
\centering
\caption{Transition diagram for the price matching process in the second-price OBSAs.}
\label{fig:6}
	\end{figure}
According to the same logic used for deriving the transition probabilities in the first-price OBSAs, we can obtain the elements of vector $\underline{Z_{x}}$ for state $x$ as follows: 
\begin{gather}\label{eq:14}
\Scale[0.90]{z_{xy}= \sum_{N=1}^{\infty}\big(1-p(B)\big) p\left(\max (b_1, \cdots , b_{N})= y \delta +v_{min} \right)}\nonumber \\ 
\Scale[0.90]{	 p(\# \mbox{\small PA}=N), \;\; L \geq x > y \geq 0,}\\ 
\Scale[0.90]{z_{xx}= \sum_{N=1}^{\infty}\big(1-p(B)\big) p\left(\max (b_1, \cdots , b_{N}) \geq x \delta +v_{min} \right)} \nonumber\\ 
\Scale[0.90]{ p(\# \mbox{\small PA}=N), \;\; L \geq x \geq 0.} 
\end{gather}
 Note that $z_{xx}$ corresponds to the situation in which the bumped $\mbox{\small PA}$ leaves the market but the winner PA's bid at the next auction is higher than the observer $\mbox{\small PA}$'s current payment.
 
 For the primary Markov chain, a transition between states only occurs when the bumped $\mbox{\small PA}$ stays in the market. Hence, the elements of the transition matrix $Q=[q_{xy}]_{0\leq x,y\leq L}$ can be derived as follows:\footnote{The result can be found by changing parameters in the derived formula for $z$.}
	\begin{gather} 
q_{xy}=  \sum_{N=1}^{\infty}p(B)p\left(\max (b_1, \cdots , b_{(N-1)})=y \delta +v_{min} \right) \nonumber\\  
 p(\# \mbox{\small PA}=N),\; L \geq x > y \geq 0, \\ 
q_{xx}=  \sum_{N=1}^{\infty}p(B)p\left(\max (b_1, \cdots , b_{(N-1)})\geq x \delta +v_{min} \right) \nonumber\\
  p(\# \mbox{\small PA}=N),\; L \geq x \geq 0. 
	\end{gather}	
		
Note that the expressions for $z_{xy},q_{xy}$ can be obtained similarly as in~\eqref{eq:cite1},\eqref{eq:olgooo}. Deriving the entering probabilities is the same as obtaining the winner PAs' payments distribution for a second-price auction. Mathematically,\footnote{The symbol $\exists!$ reads as: there is one and only one. Note that it is assumed that at least two bidders participate in the market.}
  \begin{gather}
\Scale[0.78]{\Pi_k^{0}=\sum_{N=2}^{\infty} p(\# \mbox{\small PA}=N)\Bigg[ p\bigg(\{\exists !\; w\in\{1,\cdots,N\} | b_w>k\delta+v_{min}\}} \nonumber\\  ,  
\Scale[0.8]{\{\max(b_I)=k\delta+v_{min}, I\in\{1,\cdots,N\}/w\}\bigg)}\nonumber\\ 
\Scale[0.8]{+\sum_{m=2}^{N} p\bigg(\big\{\exists! \;E\subset \{1,\cdots,N\}, |E|= m \;| \forall e\in E \; b_e=k\delta+v_{min}}\nonumber \\  
\Scale[0.8]{, \forall I\in \{1,\cdots,N\}/E \; b_I<k\delta+v_{min}\big\}\bigg) \Bigg],\; L \geq k\geq 0},  
  \end{gather}
where the first term corresponds to the case in which there is one and only one bid higher than $v_{min}+k\delta$, while the maximum of other bids is equal to $v_{min}+k\delta$, and the second term corresponds to the case in which there are at least two bids equal to $v_{min}+k\delta$ while the others are less than $v_{min}+k\delta$. We obtain the closed-form expressions in this case as follows:
  \begin{itemize}
  \item \textbf{Uniform bids}: ($C(N,n)={N\choose n}$)
  \begin{gather}   \Pi_k^{0}=\sum_{N=2}^{\infty}\Bigg[ \frac{e^{-\lambda} \lambda^N}{(N-1)!} \left(1-\left(\frac{k+1}{L+1}\right)\right) \nonumber \\ 
   \left( \frac{(k+1)^{N-1} - k^{N-1}}{(L+1)^{N-1}}\right) + \frac{e^{-\lambda} \lambda^N}{N!} \nonumber \\  
  \sum_{n=2}^{N} C(N,n) \left(\frac{1}{L+1} \right)^n \left(\frac{k}{L+1} \right)^{N-n}\Bigg].
\end{gather}
\item \textbf{Concatenated sampled Laplace bids}: 
 \begin{gather}   \Pi_k^{0}= \sum_{N=2}^{\infty}\Bigg[  \left(1-P_B(b=v_{min}+k \delta)\right) \nonumber \\ 
  \Big( (p_B(b \leq v_{min}+k \delta))^{N-1} - \nonumber \\  
  (p_B(b \leq v_{min}+k \delta-\delta))^{N-1} \Big) \frac{e^{-\lambda} \lambda^N}{(N-1)! }\nonumber \\  
  +\sum_{n=2}^{N} C(N,n) \left(p_B(b=v_{min}+k \delta)\right)^n \nonumber \\ 
  (p_B(b < v_{min}+k \delta))^{N-n} \left(\frac{e^{-\lambda} \lambda^N}{N!}\right)\Bigg]. 
\end{gather}
\hspace{-1mm}Using~\eqref{Eq:FL}, this expression can be further expanded.
  \end{itemize}
\subsubsection{Expected income of the $\mbox{\small CCN}$ manager in the second-price OBSAs}
Similar to the derivations for the first-price OBSAs in subsection~\ref{subsec:fp}, the $\mbox{\small CCN}$ manager's expected income for the second-price OBSAs can be derived as as follows.
\begin{proposition} The $\mbox{\small CCN}$ manager's expected income in the second-price OBSA when the observer $\mbox{\small PA}$'s residual patience time is $\Delta$ can be obtained as follows:
\begin{gather}\label{eq:rev}
 E[\textrm{Revenue}]=\sum_{i=0}^{L}\sum_{j=0}^{i} j\Theta^i_j \Pi^0_i=\sum_{i=0}^{L} \sum_{j=0}^{i} j \left(Q^{\Delta}\right)_{ij}  \Pi^0_{i}  \nonumber\\ 
+ \sum_{k=1}^{\Delta} \sum_{i=0}^{L} \sum_{j=0}^{i} j \left(Q^{\Delta-k} Z P^{k-1}\right)_{ij}  \Pi^0_{i}. 
\end{gather}
Note that using the defined $Q$, $Z$, $P$, the exact value of the CCN manager's expected income can be obtained. However, expanding the $\Theta^i_j$ given in the following while neglecting the higher order terms leads to an approximated lower bound of the CCN manager's income.
\vspace{-4mm}
\begin{gather}
	\Theta^i_j=P_i(\textrm{Ending at $j^{th}$ state})= 	
		 \Bigg[ \bigg[ \sum_{k=1}^{\Delta} q_{ij}q_{jj}^{\Delta - k}q_{ii}^{k-1}\nonumber\\ 
 + \sum_{k=0}^{\Delta-2} \sum_{c=0}^{\Delta-2-k} \sum_{r=j+1}^{i-1} q_{ir}q_{rj}q_{jj}^{\Delta-k-2-c}q_{i,i}^k q_{r,r}^c  \nonumber
 + \cdots \bigg] \nonumber
 \end{gather}
 \begin{gather}
 + \bigg[ \sum_{k=1}^{\Delta} z_{ij}p_{jj}^{\Delta - k}q_{ii}^{k-1}\nonumber\\ 
+ \sum_{k=0}^{\Delta-2} \sum_{c=0}^{\Delta-2-k} \sum_{r=j+1}^{i-1} q_{ir}z_{rj}p_{jj}^{\Delta-k-2-c}q_{i,i}^k q_{r,r}^c  
  + \cdots \bigg] \Bigg] . 
	\end{gather}
	\end{proposition}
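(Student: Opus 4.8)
The plan is to carry out a path-decomposition (first-step) analysis of the composite chain in Figure~\ref{fig:6}, mirroring the argument behind Proposition~\ref{th:1} for the first-price case, and then to weight the resulting ending-state distribution by the payment and the entering distribution $\Pi^0$. The first equality is merely the definition of expected income: a winner PA enters the primary Markov chain in state $i$ with probability $\Pi^0_i$, ends its residual patience time of $\Delta$ auctions in state $j$ with probability $\Theta^i_j$, and is then charged a payment proportional to $j$ (with $j$ replaced by $j+v_{min}$ when the absolute income is computed, exactly as in the first-price corollary). Hence all the work lies in identifying $\Theta^i_j$ with the stated matrix expression.

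First I would observe that the bumped PA's departure is a single, irreversible event, and that by the assumed memorylessness of its departure distribution the bumped PA stays at every auction with the constant probability $p(B)$, independently of the history. Consequently the composite process — payment state together with the indicator of whether the bumped PA is still present — is a time-homogeneous Markov chain: while the bumped PA is present the payment evolves by $Q$ (which carries the factor $p(B)$), the single departure step is governed by $Z$ (which carries the complementary factor $1-p(B)$ and routes the PA into the appropriate sub-chain starting state as in Figure~\ref{fig:5}), and after departure the payment evolves by $P$ exactly as in the first-price OBSA.

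The core step is then a conditioning on the time of the bumped PA's departure over the $\Delta$ residual auctions. The $\Delta+1$ resulting possibilities are mutually exclusive and exhaustive: either the bumped PA never departs within the window — contributing the purely-primary term $(Q^{\Delta})_{ij}$ — or it departs at the auction that leaves exactly $k-1$ subsequent auctions, for $k=1,\dots,\Delta$. In the latter case the trajectory factors as $\Delta-k$ steps under $Q$, one departure step under $Z$, and $k-1$ steps under $P$, i.e. $(Q^{\Delta-k} Z P^{k-1})_{ij}$; note that the total $(\Delta-k)+1+(k-1)=\Delta$ matches the window length. Summing over the disjoint cases gives $\Theta^i_j=(Q^{\Delta})_{ij}+\sum_{k=1}^{\Delta}(Q^{\Delta-k} Z P^{k-1})_{ij}$, and multiplying by $j$ and $\Pi^0_i$ and summing over $i,j$ yields the claimed formula. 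The explicit path-sum expansion of $\Theta^i_j$ follows by enumerating trajectories exactly as in Proposition~\ref{th:1}: the first bracket collects the paths that stay in the primary chain (factors $q$ only), while the second bracket collects the paths that take a single $z$-transition into a sub-chain and then accumulate $p$-transitions, with the self-loops $q_{ii}$, $q_{rr}$, $p_{jj}$ accounting for the dwell times.

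The main obstacle I anticipate is the bookkeeping that makes the product form $Q^{\Delta-k} Z P^{k-1}$ exact rather than merely heuristic. Two points need care. First, one must confirm that the memoryless assumption truly decouples the departure event from the payment dynamics, so that the $p(B)$ and $1-p(B)$ factors can be absorbed cleanly into $Q$ and $Z$ while keeping the composite chain time-homogeneous. Second, the ``no-change'' diagonal entries $z_{xx}$ — the case in which the bumped PA leaves but the winner's bid still exceeds the current payment — must be shown to route into the sub-chain at the correct starting state, so that the rows of $Z$ are consistent with the sub-chain entering distributions $\pi^0_n=1$ indexed in Figure~\ref{fig:6}; aligning the index $k$ with ``leaving at the right auction'' is the most error-prone part of the derivation.
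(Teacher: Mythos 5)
Your proposal is correct and takes essentially the same route as the paper: the paper gives no explicit proof for this proposition, appealing instead to the first-price derivation (Proposition~\ref{th:1}) and inspection of the transition diagram in Figure~\ref{fig:6}, and your conditioning on the bumped PA's departure time -- splitting trajectories into the never-departs case $(Q^{\Delta})_{ij}$ and the departs-with-$k-1$-auctions-remaining cases $(Q^{\Delta-k} Z P^{k-1})_{ij}$ -- is exactly the reasoning that the stated formula encodes. Your two points of care (the memoryless departure keeping the composite chain time-homogeneous so that $p(B)$ and $1-p(B)$ are absorbed into $Q$ and $Z$, and the routing of the diagonal entries $z_{xx}$ into the correct sub-chain starting state) correctly fill in the details the paper leaves implicit.
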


\subsection{The Allowable Patience Time of the $\mbox{\small PAs}$ in OBSAs}

This subsection is devoted to finding the relationship between the $\mbox{\small PAs}$' patience time and the $\mbox{\small CCN}$ manager's profit. As can be seen from Figure~\ref{fig:5}, all the states of the proposed DTHMC are transient except the last state in the first-price OBSAs. This fact implies that the corresponding steady state distribution takes the value of one only at the last state and zero everywhere else. Hence, if the PAs' patience time is sufficiently large, they usually (in the mean sense) pay the $\mbox{\small CCN}$ manager an amount close to $v_{min}$, which is not profitable for the CCN manager. Considering this fact, imposing some constraints on the PAs' patience time is necessary. 

Let $\left(X\right)_{n\geq 0} $ be a DTHMC with state space $S$. The \textit{hitting time} for an arbitrary state $a \in S$ is defined as a random variable $H^{a}=\inf \{n\geq 0 : X_n = a\}$. Subsequently, the \textit{mean hitting time} of an arbitrary state $a$ given an starting state $i$ can be expressed as $k_i^{a}= E(H^a | X_0=i)$.
By inspecting the transition diagram in Figure~\ref{fig:5}, the mean hitting times for state $0$ can be derived as \cite{ref:book}: $k_0^0=0,\; k_1^0=1+ p_{10}k_0^0+  p_{11}k_1^0,\;k_2^0=1+p_{20}k_0^0+p_{21}k_1^0+p_{22}k_2^0 ,\;\cdots  $, which can be written in a compact matrix form as follows:
 \begin{equation}\label{mat:1}
 \Scale[0.7]{\begin{pmatrix} 
 	1 & 0 &0&0& \cdots & 0 \\ 
 	-p_{10} & 1-p_{11} & 0 & 0&\cdots  & 0 \\
	-p_{20}& -p_{21} & 1-p_{22} & 0 &  \cdots & 0 \\
	\vdots & \vdots & \vdots & \vdots & \vdots & \vdots \\
	-p_{L,0} & -p_{L,1} & -p_{L,2} & \cdots & -p_{L,L-1} & 1-p_{LL}
 \end{pmatrix} }
 \Scale[0.7]{\begin{pmatrix} 
 k_0^0 \\
 k_1^0\\
 k_2^0\\
 \vdots \\
 k_L^0 \end{pmatrix} =  
 \begin{pmatrix} 
 0 \\
 1 \\ 
 1 \\
 \vdots \\
 1 \end{pmatrix}}.
 \end{equation}
Similarly, the following equation holds for the second-price OBSAs depicted in Figure~\ref{fig:6}.
 
 \begin{gather}
 \Scale[0.7]{\begin{pmatrix} 
 	1 & 0 &0&0& \cdots & 0 \\ 
 	-q_{10} & 1-q_{11} & 0 & 0&\cdots  & 0 \\
	-q_{20}& -q_{21} & 1-q_{22} & 0 &  \cdots & 0 \\
	\vdots & \vdots & \vdots & \vdots & \vdots & \vdots \\
	-q_{L,0} & -q_{L,1} & -q_{L,2} & \cdots & -q_{L,L-1} & 1-q_{LL}
 \end{pmatrix} 
 \begin{pmatrix} 
 \rho_0^0 \\
 \rho_1^0\\
 \rho_2^0\\
 \vdots \\
 \rho_L^0 \end{pmatrix} +} \\ \nonumber
 \Scale[0.7]{\begin{pmatrix} 
 	-z_{00} & 0 &0&0& \cdots & 0 \\ 
 	-z_{10} & -z_{11} & 0 & 0&\cdots  & 0 \\
	-z_{20}& -z_{21} & -z_{22} & 0 &  \cdots & 0 \\
	\vdots & \vdots & \vdots & \vdots & \vdots & \vdots \\
	-z_{L,0} & -z_{L,1} & -z_{L,2} & \cdots & -z_{L,L-1} & -z_{LL}
 \end{pmatrix} 
 \begin{pmatrix} 
 k_0^0 \\
 k_1^0\\
 k_2^0\\
 \vdots \\
 k_L^0 \end{pmatrix}
 =  
 \begin{pmatrix} 
 0 \\
 1 \\ 
 1 \\
 \vdots \\
 1 \end{pmatrix}},
 \end{gather}
 where $\rho^0_{i}$ is the mean hitting time in the primary Markov chain.
 Using the mean hitting times for state 0, for the second-price OBSAs, the expected absorption time $P^0$ to this state is given by:
 \begin{equation}\label{eq:javab}
P^0= \sum_{i=0}^{L} \Pi _i^0 \rho_i^0.
 \end{equation}
In a similar fashion, for the first-price OBSAs, solving \eqref{mat:1} along with using the vector $\pi^0$ provides the expected absorption time to state $0$. This parameter ($P^0$) can be used for setting an upper bound on the $\mbox{\small PAs}$' patience time. For instance, if the average patience time $\bar{\Delta}$ of the $\mbox{\small PAs}$ is greater than or equal to $P^0$, the $\mbox{\small CCN}$ manager makes a low long-term profit. Hence, a CCN manager can set tighter bounds on the PAs' average patience time or the PAs' maximum patience time $\Delta_{max}$, e.g., $\Delta_{max}\leq \frac{P^0}{m} \; , m\in \mathbb{Z}>1$, to guarantee a higher profit.

 In the above derivations, the focus is on the absorption to state $0$ which implies no profit for the CCN manager. However, with the same logic and the same procedure, one can derive similar equations for hitting the other states (such as state $1$ or $2$) to set tighter upper bounds on the CCN manager's profit.
\section{Interactions between CCN Managers and CPs}\label{sec:interacCCNandCPs}
In this section, we analyze the interactions between the CCN managers and the CPs. We consider a market in which CCN managers are the buyers and CPs are the sellers. The CCN managers compete to acquire resources from the CPs to fulfill their demands and attract more customers. This is a dynamic market in the sense that both of the CCN managers and the CPs can leave and join the market over time. 

For each type of servers, we divide the market into two sub-markets: flat-price market and auction-based market. In the flat-price market, resources are sold with a constant price. In the auction-based market, resources are sold through auctions. Consequently, each CCN manager has two options to obtain extra resources: (i) to buy the resources with a unit flat-price; (ii) to participate in the auctions. Correspondingly, the CPs have two options to sell their extra resources. In the rest of the discussion, without loss of generality, we focus on one type of servers.

It is assumed that auctions are held with a Poisson rate $\lambda_A$, and the number of CCN managers available in the market during each round of auction is Poisson distributed with mean $\lambda_{CCN}$. In each auction, a portion of the CCN managers act as active bidders and participate in the auction, while the rest act as passive participants and do not submit their bids; the fraction of active CCN managers is denoted by $\mu$. Moreover, we assume that the number of CPs participating in each round of the auction follows Poisson distribution with mean $\lambda_{CP}$. The time window in which a CCN manager can stay in the auction market (participation time) is limited to $T_p$. When the participation time of a CCN manager ends, he has to obtain the resources through the flat-price market with price $\varpi$.

In each auction, interested CCN managers participate with their bids reflecting their valuations for the servers, and CPs participate with their offered prices determining their least expected prices to lend their resources. The winner of each auction is the CCN manager who offers the highest bid, and he receives the resources from the CP with the lowest offered price. This happens if the lowest offered price is less than the winner's bid. Otherwise, there is no winner in the auction. As can be seen, two competitions emerge in this market: (i) the competition between CCN managers who try to offer higher prices to win the resources; (ii) the competition between CPs who try to offer lower prices to sell their resources. For the rest of this study, we focus on the CCN managers' competition since the competition among the CPs can be similarly analyzed.

Consider one of the CCN managers with $r$ units of residual participation time. Similar to~\cite{ref:RS}, aggressive bidding strategy is considered, where the CCN managers bid more values as they approach the deadline. In this case, the instantaneous bid of a CCN manager can be modeled as follows:
\begin{equation}\label{eq:mainbidding}
b(r)=e^{-\gamma r}u- D(r),
\end{equation}
where $u$ stands for the instant utility gained from the resource, $\gamma$ the CCN manager's \textit{rate of time preference}, and $D(r)$ the \textit{discounted expected utility}. More precisely, $D(r)$ determines the cost of neglecting the potential future discounts by waiting in the auction. Note that $b(r)$ should be a decreasing function with respect to $r$ since the CCN manager bids higher as he approaches the deadline. 

When a CCN manager's participation time ends, he has to leave the auction and pay the constant price $\varpi$ for the resource. This fact implies the following boundary conditions:

\begin{equation}
\begin{aligned}\label{eq:bc}
b(0)&=\varpi, \\
D(0)&= u-\varpi.
\end{aligned}
\end{equation}

It is assumed that the CCN managers' bids and the CP's offered prices are kept private. Hence, the CCN managers have no knowledge about the bids of each other. Nonetheless, the following information is assumed known to the CCN managers: (i) the cdf $F(b)$ and thus the pdf $f(b)$ of the bids; (ii) the cdf $G(o)$ and thus the pdf $g(o)$ of the CPs' offered prices. 

As mentioned before, there are two well-known mechanisms for determining the winner's payment in the auction, first-price and second-price mechanisms. In-depth analysis of the both mechanisms in the described markets is presented below.\footnote{A simplified version of our analysis is presented in \cite{ref:RS} for a retail market, where the focus is on the second-price payment mechanism. In their model, the offered prices of the sellers are not incorporated into the analysis which simplifies their model and derivations.} 
\subsection{First-price Analysis}
In the first-price payment scheme, the winner CCN manager has to pay the value of his bid to the seller CP. In the following theorem, we obtain the bidding strategy of a CCN manager in a stable market setting where the distributions of the bids and offered prices are stationary.
\begin{theorem}\label{th:be1}
For the described market, with the first-price payment mechanism, a CCN manager's expected utility can be described using the following Hamilton-Jacobi-Bellman (HJB) equation:
\begin{equation}
D(r) = (u-\varpi)e^{-\gamma r}.
\end{equation}
In this case, the bid of a CCN manager with the residual time $r$ in the stable market setting is given by:
\begin{equation}
b^*(r)=\varpi e^{-\gamma r}.
\end{equation}
\end{theorem}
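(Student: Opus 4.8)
The plan is to characterize $D(r)$ through a dynamic‑programming (Hamilton–Jacobi–Bellman) argument and then solve the resulting ordinary differential equation. First I would read $D(r)$ as the value function: the manager's expected net utility when his residual participation time is $r$ and he has not yet secured a unit of resource. Over an infinitesimal real‑time increment $dt$ (during which $r$ decreases to $r-dt$), an auction arrives with probability $\lambda_A\,dt$, and conditioned on an arrival the manager wins with some probability $P_w(b)$ fixed by the competition, i.e.\ by the event that his bid $b$ exceeds the bids of the other active CCN managers and the lowest CP offered price; this probability is expressible through the cdf's $F$ and $G$ together with the Poisson population parameters ($\mu,\lambda_{CCN},\lambda_{CP}$). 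Upon winning he realizes the instantaneous net payoff $e^{-\gamma r}u-b$ and exits the market; otherwise he continues with the time‑discounted continuation value $e^{-\gamma\,dt}D(r-dt)$. Collecting these contributions, using $D(r-dt)=D(r)-D'(r)\,dt+o(dt)$ and $e^{-\gamma\,dt}=1-\gamma\,dt+o(dt)$, and letting $dt\to 0$, I obtain the HJB relation
\[
D'(r) = \lambda_A\,P_w(b)\big[(e^{-\gamma r}u - b) - D(r)\big] - \gamma\,D(r).
\]

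The crux is the next step, where I substitute the aggressive‑bidding ansatz \eqref{eq:mainbidding}, $b(r)=e^{-\gamma r}u-D(r)$. This ansatz is engineered so that the instantaneous net gain from winning, $e^{-\gamma r}u-b(r)$, equals exactly the continuation value $D(r)$; the manager is therefore indifferent between acquiring the resource now and waiting, and the bracketed factor multiplying $\lambda_A P_w(b)$ vanishes identically. Consequently every term depending on the competition — $P_w$, $\lambda_A$, and hence $F$ and $G$ — drops out, which is the simplification peculiar to the first‑price rule (I would expect these terms to survive in the second‑price analysis, where the payment is decoupled from one's own bid). The HJB thus collapses to the linear, separable equation $D'(r)=-\gamma D(r)$.

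It then remains to integrate this ODE under the boundary condition in \eqref{eq:bc}, namely $D(0)=u-\varpi$, which encodes that a manager reaching the deadline buys at the flat price $\varpi$ and realizes net utility $u-\varpi$. Integration yields $D(r)=(u-\varpi)e^{-\gamma r}$, the stated characterization, and substituting this back into \eqref{eq:mainbidding} gives $b^*(r)=e^{-\gamma r}u-(u-\varpi)e^{-\gamma r}=\varpi e^{-\gamma r}$. As consistency checks, $b^*(0)=\varpi$ matches \eqref{eq:bc}, and $b^*(r)$ is strictly decreasing in $r$, confirming the aggressive‑bidding premise that managers bid higher as the deadline nears. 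I expect the main obstacle to be the rigorous derivation of the HJB rather than its solution: one must carefully bookkeep the discounting of the win payoff against the continuation value and justify the exit‑upon‑winning reduction, since an inconsistent treatment of these infinitesimal terms is exactly what would spoil the clean cancellation that drives the entire result.
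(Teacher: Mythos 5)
Your proposal is correct and follows essentially the same route as the paper: a dynamic-programming (Bellman) equation over a vanishing time step, in which the aggressive-bidding ansatz $b(r)=e^{-\gamma r}u-D(r)$ makes the win-payoff bracket vanish, so all competition-dependent terms (the Poisson sums in $F$, $G$, $\mu$, $\lambda_A$) drop out, leaving $D'(r)=-\gamma D(r)$, solved with $D(0)=u-\varpi$. The only presentational difference is that you abstract the win probability into a single $P_w(b)$ and state the cancellation explicitly, whereas the paper writes out the Poisson sums and leaves the cancellation implicit in the final limiting step; the content is identical.
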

\begin{proof}
Consider a short period of time $\delta$ where the probability of an auction occurring during the period can be expressed as $\delta \mu \lambda_A < 1$. The expected utility of a CCN manager with residual waiting time $r$ is given by:
\begin{equation}
\begin{aligned}
D(r)=& \frac{1}{1+\gamma \delta} \Bigg[\\
& \Bigg(1-\mu \lambda_A \delta\underbrace{\sum_{n=0}^{\infty} \frac{e^{-\lambda_{CCN}}\lambda_{CCN}^n}{n !}  F^n(b(r))}_{(a)}\\
&  
\underbrace{\sum_{n=1}^{\infty} \frac{e^{-\lambda_{CP}}\lambda_{CP}^n}{n !}  \left(1-G^n(b(r))\right)}_{(b)} \Bigg)  D(r-\delta) 
 \\
 +&\mu \lambda_A \delta \Bigg( \bigg[ \sum_{n=0}^{\infty} \frac{e^{-\lambda_{CCN}}\lambda_{CCN}^n}{n !} F^n(b(r))  \\
&\sum_{n=1}^{\infty} \frac{e^{-\lambda_{CP}}\lambda_{CP}^n}{n !}  \left(1-G^n(b(r)) \right) \bigg]\left(e^{-\gamma r}u\right)   \\ 
&- \bigg[ \sum_{n=0}^{\infty} \frac{e^{-\lambda_{CCN}}\lambda_{CCN}^n}{n !} F^n(b(r))  \\
&\sum_{n=1}^{\infty} \frac{e^{-\lambda_{CP}}\lambda_{CP}^n}{n !}  \left(1-G^n(b(r)) \right) \bigg ] b(r) \Bigg) \Bigg], 
\end{aligned}\label{eq:Bel}
\end{equation}
where the factor $1+\gamma \delta$ is the interest of time of the CCN manager. In the above equation, the second and the third lines correspond to the situation in which the CCN manager does not obtain the resource in the current period, and thus he bids in the next time period. The term (a) indicates the probability of winning of a CCN manager when he faces $n$ opponents in the auction. The term (b) indicates encountering at least a CP with a lower offered price than the current bid of the CCN manager. The rest of the equation describes the situation in which the CCN manager wins the auction in the current period. In this case, there would be a change in the utility, described by the fourth and the fifth lines, and an incurred payment which is described in the last two lines. The terms (a) and (b) can be simplified as follows: 
\begin{align}
\sum_{n=0}^{\infty} \frac{e^{-\lambda_{CCN}}\lambda_{CCN}^n}{n !} F^n\left(b(r)\right) &= e^{\lambda_{CCN} ( F(b(r))-1) },
\end{align}
\begin{align}
\sum_{n=1}^{\infty} \frac{e^{-\lambda_{CP}}\lambda_{CP}^n}{n !}  \left(1-G^n(b(r)\right)& = 1- e^{\lambda_{CP} \left(G(b(r))-1\right)}.  
\end{align}
Using the above expressions and applying some algebraic manipulations,~\eqref{eq:Bel} can be rewritten as:
\begin{equation}
\begin{aligned}\label{eq:lastExp}
\gamma &D(r)= \frac{D(r-\delta) - D(r)}{\delta} + \mu \lambda_A  \\
& \Bigg( e^{\lambda_{CCN} \left( F(b(r))-1 \right)} \left(1- e^{\lambda_{CP} \left(G(b(r))-1\right)}\right)  \\
&\left(-D(r-\delta) +e^{-\gamma r} u-b(r)\right)\Bigg). 
\end{aligned}
\end{equation}
The result of the theorem can be obtained by letting $\delta \rightarrow 0 $ and applying the boundary conditions given in~\eqref{eq:bc}.

\end{proof}
\subsection{Second-price Analysis}
In the second-price mechanism, the winner CCN manager has to pay the second highest bid to the seller CP. To obtain the CCN managers' bidding strategy in the stable market setting, we first derive the HJB equation for this case in the following theorem. Then, we propose a method to solve the derived HJB equation in the following corollaries.
\begin{theorem}\label{th:bell3}
For the described market with the second-price payment mechanism, in the stable market setting, a CCN manager's expected utility can be described using the following HJB equation:
\begin{equation}\label{Eq:theoSecond}
\begin{aligned}
\gamma D(r)&= -D'(r) + \mu \lambda_A  \\
 &\Bigg( e^{\lambda_{CCN} ( F(b(r))-1)} \left(1- e^{\lambda_{CP} (G(b(r))-1)}\right)  \\
&\left(e^{-\gamma r} u -D(r) \right)-\left(1- e^{\lambda_{CP} (G(b(r))-1)}\right)  \\
& \bigg[e^{-\lambda_{CCN}}b(T) + \int_{b(T)}^{b(r)} \lambda_{CCN} \\ 
 & e^{\lambda_{CCN} (F(b(t))-1)}b(t)F'(b(t))d (b(t))  \bigg] \Bigg). 
\end{aligned}
\end{equation}

\end{theorem}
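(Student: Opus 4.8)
The plan is to mirror the dynamic-programming argument used for Theorem~\ref{th:be1}, changing only the payment accounting that distinguishes the second-price rule. First I would fix a short interval of length $\delta$ with $\delta\mu\lambda_A<1$ and write the one-step recursion for $D(r)$ with discount factor $1/(1+\gamma\delta)$: with probability $1-\mu\lambda_A\delta W$ no sale occurs and the manager carries the continuation value $D(r-\delta)$, while with probability $\mu\lambda_A\delta$ an auction is held in which the manager wins with the same factor $W=e^{\lambda_{CCN}(F(b(r))-1)}\bigl(1-e^{\lambda_{CP}(G(b(r))-1)}\bigr)$ as in the first-price case (highest bid among the CCN managers, combined with at least one CP offering below $b(r)$), collecting utility $e^{-\gamma r}u$. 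The two Poisson sums producing $e^{\lambda_{CCN}(F(b(r))-1)}$ and $1-e^{\lambda_{CP}(G(b(r))-1)}$ are exactly those already evaluated in the proof of Theorem~\ref{th:be1}, so they can be reused verbatim.

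The only genuinely new ingredient---and the step I expect to be the main obstacle---is the expected payment under the second-price rule, since the winner now pays the second-highest bid rather than his own. Conditioned on facing $n\ge 1$ competing CCN managers and on winning (all their bids below $b(r)$), the payment is the maximum of the opponents' bids, whose density is the order-statistic density $nF^{n-1}(y)F'(y)$; the conditional contribution is therefore $\int_{b(T)}^{b(r)} y\,nF^{n-1}(y)F'(y)\,dy$, with the lower limit $b(T)$ being the reserve (the smallest admissible bid, attained at $r=T$ since $b$ decreases in $r$). When $n=0$ there is no second bid and the manager pays the reserve $b(T)$ itself. Mixing over the Poisson number of opponents and using the identity $\sum_{n\ge1}\frac{e^{-\lambda_{CCN}}\lambda_{CCN}^n}{n!}\,nF^{n-1}(y)=\lambda_{CCN}e^{-\lambda_{CCN}}e^{\lambda_{CCN}F(y)}$ collapses the sum to
\[
e^{-\lambda_{CCN}}b(T)+\int_{b(T)}^{b(r)}\lambda_{CCN}\,e^{\lambda_{CCN}(F(b(t))-1)}\,b(t)\,F'(b(t))\,d(b(t)),
\]
which is precisely the bracketed expected-payment term in the statement; it enters the recursion weighted only by the CP factor $1-e^{\lambda_{CP}(G(b(r))-1)}$, because its dependence on the number of CCN opponents has already been absorbed into the order-statistic integral.

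Finally I would substitute the win factor and this payment term into the recursion, multiply through by $1+\gamma\delta$, cancel the common $D(r-\delta)$ and $D(r)$ contributions, and divide by $\delta$. Letting $\delta\to0$ replaces the finite difference $\bigl(D(r-\delta)-D(r)\bigr)/\delta$ by $-D'(r)$ and $D(r-\delta)$ by $D(r)$, after which the winning term regroups into $e^{\lambda_{CCN}(F(b(r))-1)}\bigl(1-e^{\lambda_{CP}(G(b(r))-1)}\bigr)\bigl(e^{-\gamma r}u-D(r)\bigr)$ and the payment term appears with a minus sign, yielding exactly~\eqref{Eq:theoSecond}. The one place where care is needed in the bookkeeping is keeping the expected-payment integral separated from the utility term rather than forcing it into the product form $W\cdot(\text{const})$ used for the first price; this is the structural point at which the second-price derivation departs from the first-price one.
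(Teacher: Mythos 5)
Your proposal is correct and follows essentially the same route as the paper: both proofs reuse the dynamic-programming recursion from Theorem~\ref{th:be1} and modify only the expected-payment term, replacing the winner's own bid with the CP factor $\left(1-e^{\lambda_{CP}(G(b(r))-1)}\right)$ multiplying the bracketed term $e^{-\lambda_{CCN}}b(T)+\sum_{n\geq 1}\frac{e^{-\lambda_{CCN}}\lambda_{CCN}^n}{n!}\int_{b(T)}^{b(r)} b(t)\,nF^{n-1}(b(t))F'(b(t))\,d(b(t))$, before letting $\delta\rightarrow 0$. The only difference is presentational: you make explicit the Poisson-mixture identity that collapses the order-statistic sum into the integral $\int_{b(T)}^{b(r)}\lambda_{CCN}e^{\lambda_{CCN}(F(b(t))-1)}b(t)F'(b(t))\,d(b(t))$, a step the paper performs implicitly between its intermediate payment expression and the final HJB equation.
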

\begin{proof}
We can follow the procedure described in the proof of Theorem~\ref{th:be1} while changing the expected payment term in the derivations. In the second-price payment mechanism, a CCN manager's expected payment can be expressed as:
\begin{gather}
\left(1- e^{\lambda_{CP} (G(b(r))-1)}\right) \Big[e^{-\lambda_{CCN}}b(T) + \nonumber \\ 
 \sum_{n=1}^{\infty} \frac{e^{-\lambda_{CCN}}\lambda_{CCN}^n}{n !} \int_{b(T)}^{b(r)} b(t) n F^{n-1}(b(t)) F'(b(t))dt  \Big],
\end{gather}
where the first term expresses the probability of encountering at least one CP with a lower offered price compared to the CCN manager's bid, and the terms inside the bracket indicate the expected payment of the CCN manager upon encountering no opponent (the first term) or at least one opponent (the second term). Note that $n F^{n-1}(b(t)) F'(b(t))$ is the pdf of the second highest bid in an auction. In this case, the expected utility can be derived as follows:
 \begin{equation}
 \begin{aligned}\label{eq:lastExp2}
\gamma D(r)&= \frac{D(r-\delta) - D(r)}{\delta} + \mu \lambda_A  \\
 &\Bigg[ e^{\lambda_{CCN} ( F(b(r))-1)} \left(1- e^{\lambda_{CP} (G(b(r))-1)}\right)  \\
&\left(-D(r-\delta) +e^{-\gamma r} u\right)  \\
&-\left(1- e^{\lambda_{CP} (G(b(r))-1)}\right) \\ 
& \Big[e^{-\lambda_{CCN}}b(T) +  \int_{b(T)}^{b(r)} \lambda_{CCN}   \\
&e^{\lambda_{CCN}(F(b(t))-1)}b(t)F'(b(t))d(b(t))  \Big] \Bigg]. 
\end{aligned}
 \end{equation}
The result of the theorem can be obtained by letting~$\Scale[0.95]{\delta \rightarrow 0}$.
\end{proof}

\begin{corollary}\label{th:DE}
The bid of a CCN manager with the residual participation time $r$, in the second-price payment mechanism, can be described by the following integro-differential equation:
\begin{equation}
 \begin{aligned}\label{eq:lem1}
 b^*(r)& \bigg[-\gamma -\mu \lambda_A\left(e^{\lambda_{CCN} \left( F(b^*(r))-1\right)}\right) \\ 
& \left(1- e^{\lambda_{CP} (G(b^*(r))-1)} \right) \bigg] -
 b^{*^{'}}(r)  \\ 
 +& \mu \lambda_A \lambda_{CCN} \left(1- e^{\lambda_{CP} (G(b^*(r))-1)}\right)  \\  
 &\int_{b^*(T)}^{b^*(r)}  e^{\lambda_{CCN} (F(b(t))-1)}b(t)F'\left(b(t)\right)d(b(t))  \\
 +& \mu \lambda_A  \left(1- e^{\lambda_{CP} (G(b^*(r))-1)}\right) e^{-\lambda_{CCN}}b^*(T)   \\ 
 -&\gamma u e^{-\gamma r}  =0 .
 \end{aligned}
\end{equation}
 Considering the boundary conditions given in~\eqref{eq:bc}, one can numerically solve the above equation so as to find the optimal bidding strategy at each time instant.
\end{corollary}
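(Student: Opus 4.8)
The plan is to recognize that Corollary~\ref{th:DE} is simply the Hamilton--Jacobi--Bellman equation \eqref{Eq:theoSecond} of Theorem~\ref{th:bell3}, re-expressed in terms of the equilibrium bid function rather than the discounted expected utility $D(\cdot)$. No new probabilistic modeling is required; the whole argument is an algebraic change of variable driven by the defining relation \eqref{eq:mainbidding}. First I would solve \eqref{eq:mainbidding} for the value function, writing $D(r)=u e^{-\gamma r}-b^*(r)$, and differentiate to obtain $D'(r)=-\gamma u e^{-\gamma r}-b^{*\prime}(r)$, so that both $D$ and $D'$ are expressed purely through the bid $b^*$ and its derivative.

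Next I would substitute these expressions into \eqref{Eq:theoSecond}. Three observations carry the computation through. First, the left-hand side becomes $\gamma D(r)=\gamma u e^{-\gamma r}-\gamma b^*(r)$ while the $-D'(r)$ term becomes $\gamma u e^{-\gamma r}+b^{*\prime}(r)$, so the $u e^{-\gamma r}$ pieces generated by the value function and by its derivative can be combined and the $b^{*\prime}(r)$ contribution tracked explicitly. Second, inside the ``winning'' term of \eqref{Eq:theoSecond} the factor $e^{-\gamma r}u-D(r)$ collapses to exactly $b^*(r)$, which lets the coefficient $\mu\lambda_A e^{\lambda_{CCN}(F(b^*(r))-1)}\bigl(1-e^{\lambda_{CP}(G(b^*(r))-1)}\bigr)$ multiply $b^*(r)$ and be grouped with the $-\gamma b^*(r)$ term into the bracket $b^*(r)\bigl[-\gamma-\mu\lambda_A e^{\lambda_{CCN}(F(b^*(r))-1)}(1-e^{\lambda_{CP}(G(b^*(r))-1)})\bigr]$. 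Third, the expected-payment bracket is untouched in its dependence on $b^*$; distributing $\mu\lambda_A\bigl(1-e^{\lambda_{CP}(G(b^*(r))-1)}\bigr)$ over it pulls the constant $\lambda_{CCN}$ out in front of the integral and isolates the boundary contribution $\mu\lambda_A(1-e^{\lambda_{CP}(G(b^*(r))-1)})e^{-\lambda_{CCN}}b^*(T)$. Collecting every resulting term onto one side and setting the sum to zero then yields \eqref{eq:lem1}.

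Since each step is routine, the only real care lies in the bookkeeping rather than in any conceptual difficulty. The point I would watch most closely is the integral term: it carries its own dummy bid variable $b^*(t)$ ranging over $[b^*(T),b^*(r)]$, together with the density factor $F'(b^*(t))$ and the measure $d(b^*(t))$ inherited verbatim from the payment expression of Theorem~\ref{th:bell3}, and this must not be conflated with the instantaneous bid $b^*(r)$ appearing in the algebraic factors. I would close by noting that, because the unknown $b^*$ enters both through its derivative $b^{*\prime}(r)$ and under the integral, \eqref{eq:lem1} is a genuine integro-differential equation admitting no closed form; supplemented by the boundary conditions \eqref{eq:bc}, in particular $b^*(0)=\varpi$, it is well posed and can be integrated numerically, which is exactly what the corollary asserts.
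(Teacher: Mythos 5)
Your proposal is correct and is essentially identical to the paper's own proof of Corollary~\ref{th:DE}: the paper likewise obtains \eqref{eq:lem1} purely by substituting $D(r)=ue^{-\gamma r}-b(r)$ and $D'(r)=-\gamma u e^{-\gamma r}-b'(r)$ (from \eqref{eq:mainbidding}) into the HJB equation \eqref{Eq:theoSecond}, collapsing $e^{-\gamma r}u-D(r)$ into $b(r)$, and rearranging, with no further probabilistic argument. So in approach there is nothing to distinguish the two.

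One bookkeeping caveat, which applies to the paper's displayed intermediate equation just as much as to your write-up: when the substitution is carried out strictly, the $\gamma u e^{-\gamma r}$ contributed by $\gamma D(r)$ on the left cancels exactly against the $\gamma u e^{-\gamma r}$ contributed by $-D'(r)$ on the right, so no residual $-\gamma u e^{-\gamma r}$ term survives the rearrangement. Consequently the final term of \eqref{eq:lem1} as printed does not actually follow from \eqref{Eq:theoSecond}; the correct rearrangement is \eqref{eq:lem1} with that last term deleted. Your phrasing that the $u e^{-\gamma r}$ pieces ``can be combined'' (rather than cancel) quietly reproduces the same discrepancy that the paper's ``by rearranging the terms'' glosses over, and the spurious term then propagates into the right-hand side $-\gamma u e^{-\gamma r}$ of \eqref{eq:firstdiff} in Corollary~\ref{col:bidOpt}. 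This does not change the verdict that your route is the paper's route, but if you push the algebra to the end you should flag, rather than inherit, that extra term.
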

\begin{proof}
Replacing $D(r)$ and $D'(r)$ in~\eqref{Eq:theoSecond} with~\eqref{eq:mainbidding} yields:
\begin{equation}
\begin{aligned}
\gamma \big(&e^{-\gamma r}u-b(r)\big)= \left(b'(r)+\gamma e^{-\gamma r}u\right) + \mu \lambda_A   \\
& \Bigg[ e^{\lambda_{CCN} ( F(b(r))-1)} \left(1- e^{\lambda_{CP} (G(b(r))-1)}\right) b(r)  \\
&-\left(1- e^{\lambda_{CP} (G(b(r))-1)}\right) \Big[e^{-\lambda_{CCN}}b(T) +  \\ 
& \int_{b(T)}^{b(r)} \lambda_{CCN} e^{\lambda_{CCN} (F(b(t))-1)}b(t)F'(b(t))d (b(t))  \Big] \Bigg].
 \end{aligned}
\end{equation}

 By rearranging the terms, one can obtain~\eqref{eq:lem1}.
\end{proof}
Note that no assumption is made on the pdf and cdf of the bids and offered prices in the above derivations and the corresponding results hold for any arbitrary distribution. In the following, by assuming a specific cdf and pdf for these quantities, we explicitly derive the bidding strategy of a CCN manager as a first-order differential equation.
\begin{corollary} \label{col:bidOpt}
Consider a market in which higher bids and offered prices happen with low probabilities. More precisely, assume that the pdfs of bids and offered prices measured at each value $x$ are both inversely proportional to $x$. In this case, the pdf and cdf of the bids and the offered prices can be described as follows:
\begin{equation}
\begin{aligned}
g(b(r))&=f(b(r))= \frac{1}{b(r) [\ln(z)-\ln(a)]} ,  \\ 
G(b(r))&=F(b(r))= \frac{\ln (b(r)) - \ln (a)}{\ln(z)-\ln(a)},  \\ 
b(r)&\in [a,z],\; 0<a<z<\varpi  .
\end{aligned}\label{eq:bid11}
\end{equation}
If the initial CCN manager's bid is small such that $b(T)\rightarrow a$, where $a \rightarrow 0$, then the bidding strategy of a CCN manager, with residual patience time $r$, is dictated by the following first-order differential equation:
\begin{equation}
\begin{aligned}\label{eq:firstdiff}
y'(r)+c_1(y(r))^{c_2}+c_3(y(r))^{c_4}+c_5(y(r))^{c_6}+c_7 y(r) \\ 
=-\gamma u e^{-\gamma r},
\end{aligned}
\end{equation}
where $y(r)=b^*(r)$, and 
\begin{equation}
 \begin{aligned}\label{eq:constantsofdiff}
c_1=& -\mu \lambda_A e^{-\lambda_{CCN}} e^{\frac{-\lambda_{CCN} \ln(a)}{\ln(z)-\ln(a)}} e^{-\lambda_{CP}} e^{\frac{-\lambda_{CP} \ln(q)} {\ln(z)-\ln(q)}} \\ 
&+  \mu \lambda_A \lambda_{CCN} e^{-\lambda_{CP}}e^{\frac{-\lambda_{CP} \ln(q)}{\ln(z)-\ln(q)}}  \\
 & e^{-\lambda_{CCN}} e^{\frac{-\lambda_{CCN} \ln(a)}{\ln(z)-\ln(a)}}\left(\frac{1}{\lambda_{CCN}+\ln(z)-\ln(a)}\right), \\
c_2=& \frac{\lambda_{CCN}}{\ln(z)-\ln(a)} + \frac{\lambda_{CP}}{\ln(z)-\ln(q)}+1,  \\
c_3=& -\gamma e^{-\lambda_{CP}} e^{\frac{-\lambda_{CP} \ln(q)}{\ln(z)-\ln(q)}},  \\
c_4=&\frac{\lambda_{CP}}{\ln(z)-\ln(q)}+1, \\
c_5=& \mu \lambda_A  e^{-\lambda_{CCN}} e^{\frac{-\lambda_{CCN} \ln(a)} {\ln(z)-\ln(a)}},   \\
&+ \mu \lambda_A \lambda_{CCN} e^{-\lambda_{CCN}}e^{\frac{-\lambda_{CCN} \ln(a)}{\ln(z)-\ln(a)}}  \\
c_6=& \frac{\lambda_{CCN}}{\ln(z)-\ln(a)}+1 , \\
c_7=&\gamma  .
 \end{aligned}
\end{equation}
\end{corollary}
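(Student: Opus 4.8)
The plan is to start from the integro-differential equation \eqref{eq:lem1} established in Corollary~\ref{th:DE} and specialize it to the reciprocal (log-uniform) laws \eqref{eq:bid11}, under which every transcendental quantity appearing in \eqref{eq:lem1} degenerates to an elementary power of $b^*(r)$. The observation driving the whole computation is that a density proportional to $1/x$ has a logarithmic cumulative distribution, so that exponentiating it returns a power law. Concretely, the first step is to substitute $F(b^*(r))=\frac{\ln b^*(r)-\ln a}{\ln z-\ln a}$ and record the identity
\[
e^{\lambda_{CCN}(F(b^*(r))-1)}=\Big(\tfrac{b^*(r)}{z}\Big)^{\frac{\lambda_{CCN}}{\ln z-\ln a}}=e^{-\lambda_{CCN}}e^{\frac{-\lambda_{CCN}\ln a}{\ln z-\ln a}}\,\big(b^*(r)\big)^{\frac{\lambda_{CCN}}{\ln z-\ln a}},
\]
together with the analogous identity for $e^{\lambda_{CP}(G(b^*(r))-1)}$. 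These two identities already produce the constant prefactors of \eqref{eq:constantsofdiff} and the exponents $c_2,c_4,c_6$.

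The second step handles the integral term. Because $g=f$ is reciprocal, the product $b(t)\,F'(b(t))=\frac{1}{\ln z-\ln a}$ is constant, so the integrand reduces to a pure power of $b(t)$ and the integral is elementary:
\[
\int_{b^*(T)}^{b^*(r)}e^{\lambda_{CCN}(F(b(t))-1)}b(t)F'(b(t))\,d(b(t))=\frac{e^{-\lambda_{CCN}}e^{\frac{-\lambda_{CCN}\ln a}{\ln z-\ln a}}}{\lambda_{CCN}+\ln z-\ln a}\Big[\big(b^*(r)\big)^{\frac{\lambda_{CCN}}{\ln z-\ln a}+1}-\big(b^*(T)\big)^{\frac{\lambda_{CCN}}{\ln z-\ln a}+1}\Big],
\]
where the denominator $\lambda_{CCN}+\ln z-\ln a$ arises from $(\ln z-\ln a)\big(\frac{\lambda_{CCN}}{\ln z-\ln a}+1\big)$ and is precisely the factor multiplying the $\lambda_{CCN}$ contribution to $c_1$. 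I would then invoke the stated regime $b^*(T)\to a\to0$: since the exponent $\frac{\lambda_{CCN}}{\ln z-\ln a}+1$ is strictly positive, the lower-limit term $\big(b^*(T)\big)^{\frac{\lambda_{CCN}}{\ln z-\ln a}+1}$ vanishes, and for the same reason the boundary contribution $\mu\lambda_A\big(1-e^{\lambda_{CP}(G(b^*(r))-1)}\big)e^{-\lambda_{CCN}}b^*(T)$ in \eqref{eq:lem1} vanishes as well. Removing these two $b^*(T)$-dependent pieces is exactly what turns \eqref{eq:lem1} into a genuine first-order ordinary differential equation in $b^*(r)$.

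The final step is to regroup the resulting expression by distinct powers of $y(r)=b^*(r)$. After substitution the surviving exponents are $1$ (from the bare $-\gamma\,b^*(r)$ term) together with the three exponents $c_4,c_6,c_2$ recorded in \eqref{eq:constantsofdiff}, the largest of which ($c_2$) being produced by the cross-product of the two exponentiated cdfs; transferring $b^{*\prime}(r)$ and the forcing term $-\gamma u e^{-\gamma r}$ to opposite sides and reading off coefficients yields \eqref{eq:firstdiff} with the constants \eqref{eq:constantsofdiff}. I expect the main obstacle to be the bookkeeping in this last regrouping rather than any single estimate: one has to keep separate the two distinct origins of the coefficients attached to the higher powers --- the direct $\mu\lambda_A$ term coming straight from \eqref{eq:lem1} and the $\mu\lambda_A\lambda_{CCN}/(\lambda_{CCN}+\ln z-\ln a)$ term emerging from the evaluated integral --- and to confirm, via the positivity of the surviving exponents, that the $a\to0$ limit is taken consistently so that no lower-limit or boundary remnant leaks into the coefficients \eqref{eq:constantsofdiff}.
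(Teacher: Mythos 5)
Your route is the same as the paper's: substitute the log-uniform laws \eqref{eq:bid11} into the integro-differential equation \eqref{eq:lem1}, turn each exponentiated cdf into a power of $b^*(r)$ via $e^{\lambda_{CCN}(F(b)-1)}=A\,b^{\frac{\lambda_{CCN}}{\ln z-\ln a}}$ with $A=e^{-\lambda_{CCN}}e^{\frac{-\lambda_{CCN}\ln a}{\ln z-\ln a}}$ (and its CP analogue), use $b(t)F'(b(t))=\frac{1}{\ln z-\ln a}$ so the integral evaluates to $\frac{A}{\lambda_{CCN}+\ln z-\ln a}\big[(b^*(r))^{c_6}-(b^*(T))^{c_6}\big]$, discard the $b^*(T)$-dependent pieces as $b^*(T)\to a\to 0$, and regroup powers of $y=b^*$. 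Those steps are correct and identical to the paper's proof, including the provenance of the denominator $\lambda_{CCN}+\ln z-\ln a$ inside $c_1$ and the implicit dropping of the lower integration limit and of the $\mu\lambda_A\big(1-e^{\lambda_{CP}(G-1)}\big)e^{-\lambda_{CCN}}b^*(T)$ term.

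The gap is in the final regrouping, which you only assert. Nothing in the pipeline you describe can produce a term with exponent $c_4=\frac{\lambda_{CP}}{\ln(z)-\ln(q)}+1$: under your (literal) reading of \eqref{eq:lem1}, the term $-\gamma b^*(r)$ stands alone, and once the $b^*(T)$ pieces are gone, every surviving occurrence of the factor $\big(1-e^{\lambda_{CP}(G(b^*(r))-1)}\big)$ multiplies a quantity already proportional to $(b^*(r))^{c_6}$, so the only exponents that can appear are $1$, $c_6$, and $c_2$ --- a four-term equation, not \eqref{eq:firstdiff}. In the paper's own computation the pair $c_3,c_4$ arises because the paper distributes the CP factor over the whole first bracket, expanding $-\gamma b(r)\big(1-e^{\lambda_{CP}(G(b(r))-1)}\big)=-\gamma b(r)+\gamma e^{-\lambda_{CP}}e^{\frac{-\lambda_{CP}\ln q}{\ln z-\ln q}}(b(r))^{c_4}$; that grouping of \eqref{eq:lem1} is different from yours, and without adopting it you cannot reach the stated five-term ODE. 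Relatedly, had you actually read off the $y^{c_6}$ coefficient from the two sources you yourself identify --- $+\mu\lambda_A A$ from the direct term and $-\frac{\mu\lambda_A\lambda_{CCN}A}{\lambda_{CCN}+\ln z-\ln a}$ from the evaluated integral, entering with opposite signs just as in $c_1$ --- you would have obtained a value that does not match the printed $c_5$ in \eqref{eq:constantsofdiff}, which carries no $\frac{1}{\lambda_{CCN}+\ln z-\ln a}$ factor. Confronting these two mismatches (the origin of $c_3y^{c_4}$ and the value of $c_5$) is precisely the substance of proving this corollary; declaring that the coefficients "read off" to \eqref{eq:constantsofdiff} skips the only step where anything can go wrong, and as written your argument terminates at a different equation than the one claimed.
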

\begin{proof}
Considering the given expressions for the cdf and pdf of bids and offered prices,~\eqref{eq:lem1} can be rewritten as:
\vspace{-2mm}
\begin{equation}
 \begin{aligned}
 b(r) &\Bigg[-\gamma -\mu \lambda_A e^{-\lambda_{CCN}}e^{\frac{\lambda_{CCN} \ln (b(r))}{\ln(z)-\ln(a)}} e^{\frac{-\lambda_{CCN} \ln(a)}{\ln(z)-\ln(a)}}  \\ 
 &\left(1-e^{-\lambda_{CP}}e^{\frac{\lambda_{CP} \ln (b(r))}{\ln(z)-\ln(q)}} e^{\frac{-\lambda_{CP} \ln(q)}{\ln(z)-\ln(q)}}\right) \Bigg] -
b'(r)  \\ 
& + \mu \lambda_A \lambda_{CCN} \left(1-e^{-\lambda_{CP}}e^{\frac{\lambda_{CP} \ln (b(r))}{\ln(z)-\ln(q)}} e^{\frac{-\lambda_{CP} \ln(q)}{\ln(z)-\ln(q)}}\right)  \\  
&\Bigg( \int_{b(T)}^{b(r)}  e^{-\lambda_{CCN}}e^{\frac{\lambda_{CCN} \ln (b(t))}{\ln(z)-\ln(a)}} e^{\frac{-\lambda_{CCN} \ln(a)}{\ln(z)-\ln(a)}}   \\ 
&\frac{1}{ \ln(z)-\ln(a)} d(b(t)) \Bigg)  \\
 &+ \mu \lambda_A  \left(1-e^{-\lambda_{CP}}e^{\frac{\lambda_{CP} \ln (b(r))}{\ln(z)-\ln(q)}} e^{\frac{-\lambda_{CP} \ln(q)}{\ln(z)-\ln(q)}}\right)  \\ 
&  e^{-\lambda_{CCN}}b(T) -\gamma u e^{-\gamma r}  =0.
 \end{aligned}
\end{equation}
Performing some algebraic manipulations and taking the integral and then rearranging the terms yields:
\vspace{-1.5mm}
\begin{equation}
 \begin{aligned}
-&\gamma (b(r)) + \gamma e^{-\lambda_{CP}} e^{\frac{-\lambda_{CP} \ln(q)}{\ln(z)-\ln(q)}} (b(r))^{\frac{\lambda_{CP}}{\ln(z)-\ln(q)}+1}  \\
-&\mu \lambda_A  e^{-\lambda_{CCN}} e^{\frac{-\lambda_{CCN} \ln(a)} {\ln(z)-\ln(a)}} (b(r))^{\frac{\lambda_{CCN}}{\ln(z)-\ln(a)}+1}   \\
+& \mu \lambda_A e^{-\lambda_{CCN}} e^{\frac{-\lambda_{CCN} \ln(a)}{\ln(z)-\ln(a)}} e^{-\lambda_{CP}} e^{\frac{-\lambda_{CP} \ln(q)} {\ln(z)-\ln(q)}}
 \\
 &(b(r))^{\frac{\lambda_{CCN}}{\ln(z)-\ln(a)} + \frac{\lambda_{CP}}{\ln(z)-\ln(q)}+1}
 - b'(r)  \\ 
 +& \mu \lambda_A \lambda_{CCN} e^{-\lambda_{CCN}}e^{\frac{-\lambda_{CCN} \ln(a)}{\ln(z)-\ln(a)}}  \\
  &(b(r))^{\frac{\lambda_{CCN}}{\ln(z)-\ln(a)}+1} \left(\frac{1}{\lambda_{CCN}+\ln(z)-\ln(a)}\right) \\
  -& \mu \lambda_A \lambda_{CCN} e^{-\lambda_{CP}}e^{\frac{-\lambda_{CP} \ln(q)}{\ln(z)-\ln(q)}}  \\
 & (b(r))^{\frac{\lambda_{CP}}{\ln(z)-\ln(a)}} e^{-\lambda_{CCN}} e^{\frac{-\lambda_{CCN} \ln(a)}{\ln(z)-\ln(a)}}  \\
  &\left(\frac{1}{\lambda_{CCN}+\ln(z)-\ln(a)}\right) (b(r))^{\frac{\lambda{CCN}}{\ln(z)-\ln(a)}+1} \\
 +& \mu \lambda_A - \mu \lambda_A e^{-\lambda_{CP}}e^{\frac{-\lambda_{CP} \ln(q)}{\ln(z)-\ln(q)}}   (b(r))^{\frac{\lambda_{CP}}{\ln(z)-\ln(q)}}  \\
 &e^{-\lambda_{CCN}}b(T) -\gamma u e^{-\gamma r}  =0.  
 \end{aligned}
\end{equation}
This can be expressed by~\eqref{eq:firstdiff}-\eqref{eq:constantsofdiff}.
\end{proof}

\vspace{-5mm}
\section{Simulation Results}\label{sec:sim}
\noindent The performance of the proposed scheme is illustrated through four simulation scenarios.
\subsection{Scenario 1: CCN Managers' Income in OBSAs}
We consider the market described in Table~\ref{table:0}, where the bids' range is adopted from the history of winners' bid of Amazon's memory optimized instance (x1.32xlarge) in the past three month~\cite{ref:amazonPricing}. The minimum (maximum) value of bids corresponds to the $50\%$ of the minimum winner bid (maximum winner bid) in that dataset upon requesting the cloud instance for $24$ hours.\footnote{In the second-price case, the bumped $\mbox{\small PA}$ is assumed to participate in the next auction with the probability $p(B)=0.5$.} The presented analytic results for the $\mbox{\small CCN}$ manager's income in Section~\ref{sec:analysisOBSA} and the results obtained from $10000$ Monte Carlo simulations are depicted in Figure~\ref{sim:4}. It can be seen that the analytic results and the simulation results agree well with each other. Also it can be observed that the $\mbox{\small CCN}$ manager's income becomes less sensitive to the number of $\mbox{\small PAs}$ after the market gets sufficiently crowded. Moreover, the effect of increasing the patience time on the $\mbox{\small CCN}$ manager's income becomes more pronounced when there are fewer participant $\mbox{\small PAs}$. 
\begin{table}[H]
	\centering
		\resizebox{\columnwidth}{!}{%
	\begin{tabular}{|c |c |c |c| }
		\hline
	 Auction mechanism&	Bids' Range & Distribution & Parameters \\ \hline
	\begin{tabular}{c}Fist-price  \\ \hline Second-price \end{tabular} &	[48,312] & \begin{tabular}{c}Uniform  \\ \hline Sampled Laplace \end{tabular} &\begin{tabular}{c} -  \\ \hline $\mu=70,\delta=1,w=50$ \end{tabular}  \\ \hline
	\end{tabular}}
	\caption{Simulation setting for the dynamic market between the PAs and the CCN managers.}\label{table:0}
\end{table}
\subsection{Scenario 2: Bids of CCN Managers}
To demonstrate the CCN managers' bidding behavior in our proposed model, we simulate the result of Corollary~\ref{col:bidOpt}. For this purpose, a market with the parameters described in Table~\ref{table:1} is considered. As before, considering the pricing history of Amazon's memory optimized instance in the past three month~\cite{ref:amazonPricing}, the value of $z$ is obtained assuming the maximum bid of a CCN for the cloud resources to be one-third of its maximum sold price. Due to the lack of a real dataset, the rest of the parameters are chosen based on common sense. Figure~\ref{sim:bids} depicts the value of the CCN managers' bids over time for various values of $\gamma$. As can be seen, as the value of $\gamma$ increases, the CCN managers lose their interest in buying resources faster and bidding higher values. 

\begin{table}[H]
\centering
 \begin{tabular}{|c |c |c| c|c|c|c|c|}
 \hline
 Parameter & $u$ & $\mu$ &$\lambda_A$&$\lambda_{CCN}$&$\lambda_{CP}$ &$z$&$a$\\ \hline
 Value & $5$ & $0.6$ & $0.2$ & $0.5$ & $0.75$ & $104$   & $0.01$	 \\ \hline
 \end{tabular}
 \caption{Simulation setting for the dynamic market between the CCN managers and the CPs.}\label{table:1}
\end{table}

\subsection{Scenario 3: Selling the Cloud Resources}
In Figure~\ref{sim:7}, the improvement that can be obtained using the OBSAs instead of the sequential combinatorial auctions is shown in terms of the total number of winners and total sold cloud servers. Also, Figure~\ref{sim:10} reveals the corresponding CCN manager's income. For this simulation, $10$ types of servers are considered where the $\mbox{\small CCN}$ manager holds $1000$ auctions for each of these types synchronously at every minute, where the total number of needed types of servers for each PA is uniformly distributed in the interval $[1,10]$. The arrival rates (per minuet) for servers are i.i.d and uniformly distributed in the interval $[1/15,1]$. This interval is derived based on the reported data in~\cite{ref:googleTrace}, which indicates that the majority of task executions are under $15$ minutes. It is assumed that customers do not leave the market without obtaining their demanded resources (long patience time). The results in Figures~\ref{sim:7},~\ref{sim:10} indicate an increase in the customers' satisfaction, utilization of resources, and the CCN manager's income upon using the OBSAs. 
\subsection{Scenario 4: Market Stability}
Figure~\ref{sim:8} compares the variance of the winners' payment for the second-price OBSA described in Scenario 1 and that of the sequential combinatorial auction for various number of PAs ($N$) and different residual patience times ($\Delta$). Less variance for the payments in the former is seen in Figure~\ref{sim:8} implying a more predictable market for the $\mbox{\small CCN}$ manager. Also, similar results are observed for the first-price OBSAs, which are omitted for the interest of space. 

Figure~\ref{sim:9} depicts the number of participant PAs in the second-price OBSA and that in the sequential combinatorial auction. In this scenario, at each time instant one auction occurs, where the number of available PAs at each time instant is assumed to be a Poisson random variable with mean $93$. This value is adopted from~\cite{ref:googleTrace} assuming that $10\%$ of all the PAs participate in each round of auction. Each PA can delay his participation. It is assumed that $20\%$ of these PAs have side information about the future market situation. It is assumed that the market receives the lowest bids from participant PAs in every $10$ time instances. As can be seen, by utilizing sequential combinatorial auctions, those informed PAs delay their entrance into the market, leading to an unfavorable burst of arrival. In contrast, this issue is effectively suppressed by the proposed OBSAs.  
\section{Conclusion and Future Work}~\label{sec:concl}
\hspace{-2mm}In this work, we proposed a comprehensive two stage framework to describe resource allocation and gathering in modern cloud networks. The first stage describes the interactions between the PAs and the CCN managers. For this stage, OBSAs along with their theoretical analysis are proposed, which enjoy a simple winner determination process and provide the truthfulness property. The second stage models the interactions between the CCN managers and the CPs. For this stage, a theoretical framework is developed to model the bidding behavior of the CCN managers. For future work, one worthwhile direction is to explore the optimization of the social welfare or other parameters of interest. Studying the resource allocation and the load balancing problems jointly is also interesting. In this case, a CCN manager should consider the geographical locations of the servers and $\mbox{\small CPs}$ to find the optimal resource allocation. 

  \begin{figure*}[t]
	\minipage{5.5cm}
	\includegraphics[width=1.0\linewidth, height=0.7\linewidth]{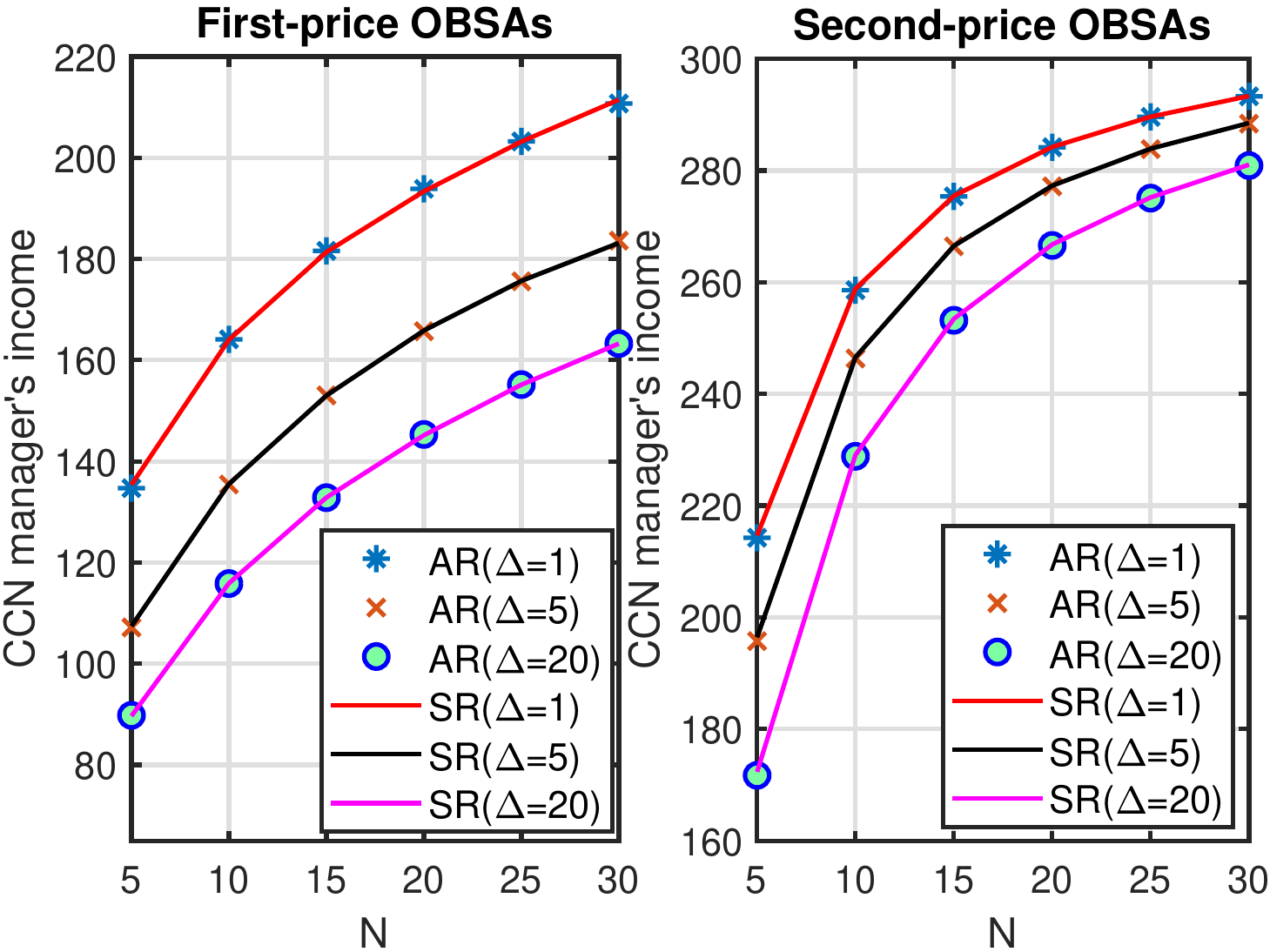}
	\caption{Analytic results ($\mbox{\small AR}$) and simulation results ($\mbox{\small SR}$) for the $\mbox{\small CCN}$  manager's income in the first- and second-price OBSAs for various number of $\mbox{\small PAs}$ ($\mbox{\small N}$) and different residual patience times ($\Delta$).\label{sim:4}}
	\endminipage
	\quad
	\minipage{5.5cm}
		\vspace{-7.9mm}
	\includegraphics[width=1.0\linewidth, height=0.7\linewidth]{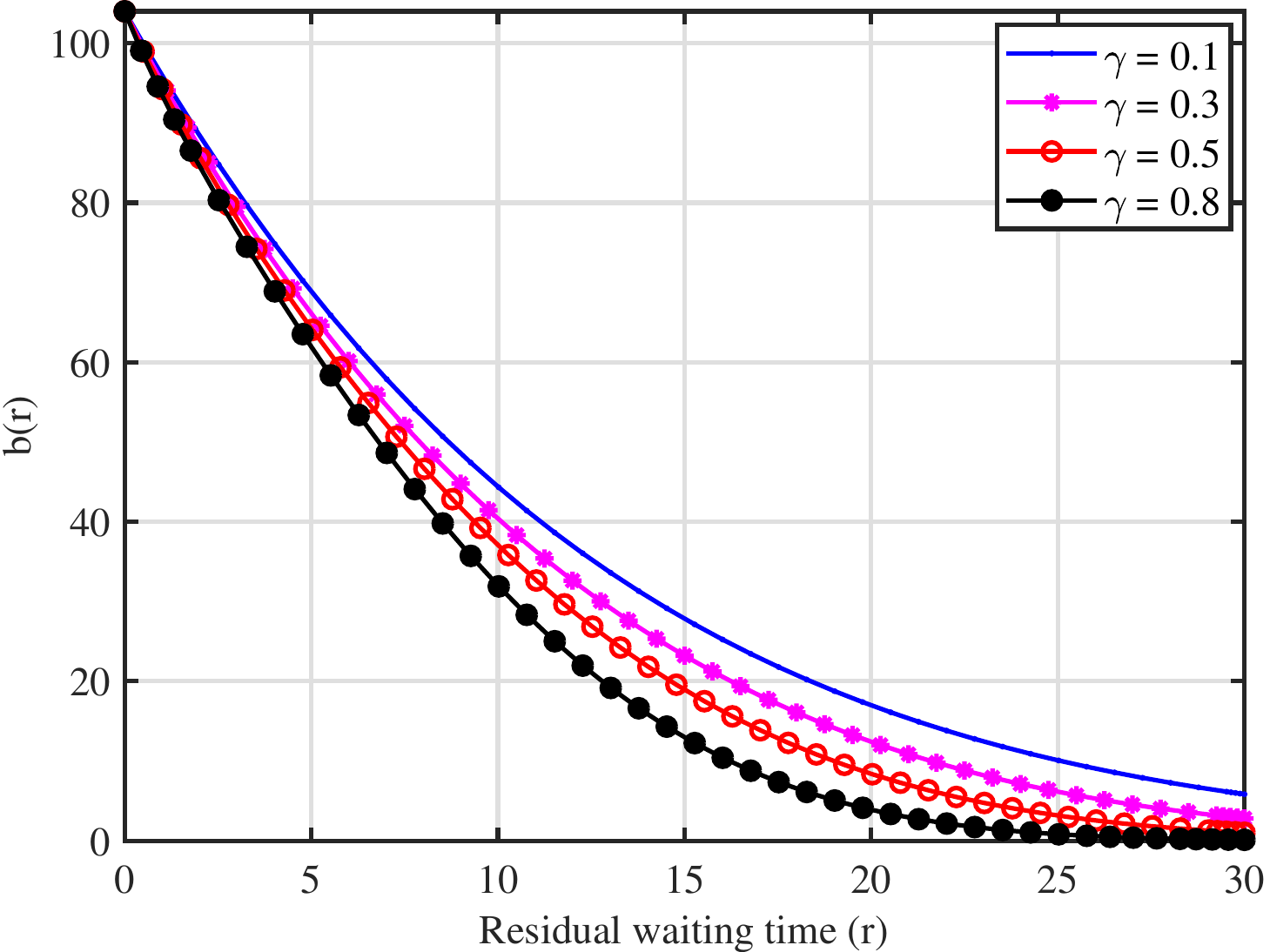}
\caption{The CCN managers' bids with respect to the residual waiting time for different rates of time
preference.\label{sim:bids}}
	\endminipage
	\quad
	\minipage{5.5cm}
    	\vspace{-1.0mm}
	\includegraphics[width=1.0\linewidth, height=0.7\linewidth]{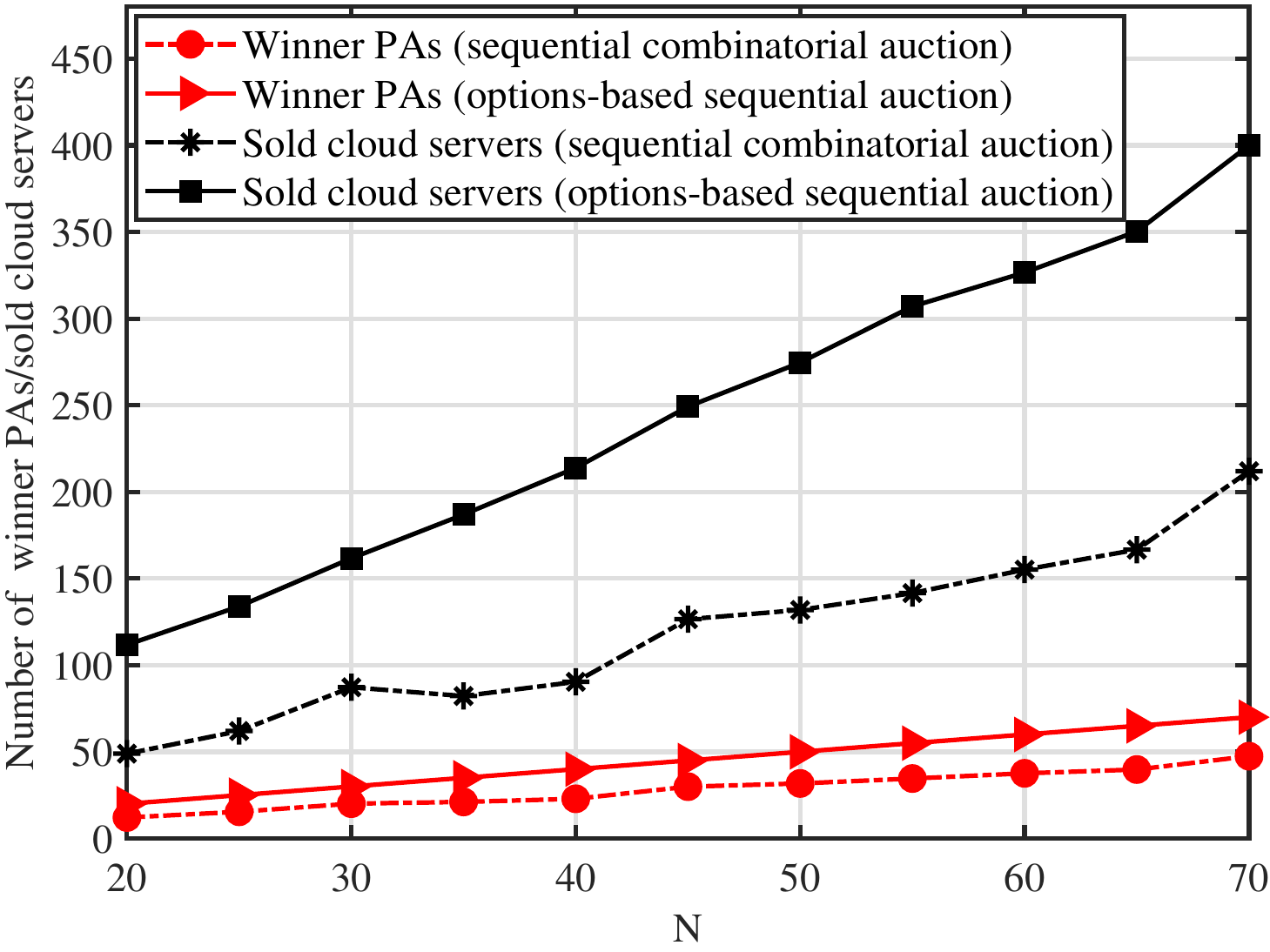}
	\caption{Performance comparison between the sequential combinatorial auction and the second-price OBSA for various number of $\mbox{\small PAs}$ in the system ($\mbox{\small N}$).\label{sim:7}}
	\endminipage
\end{figure*}

  \begin{figure*}[t]
	\minipage{5.5cm}
    	\vspace{-4mm}
	\includegraphics[width=1.0\linewidth, height=0.7\linewidth]{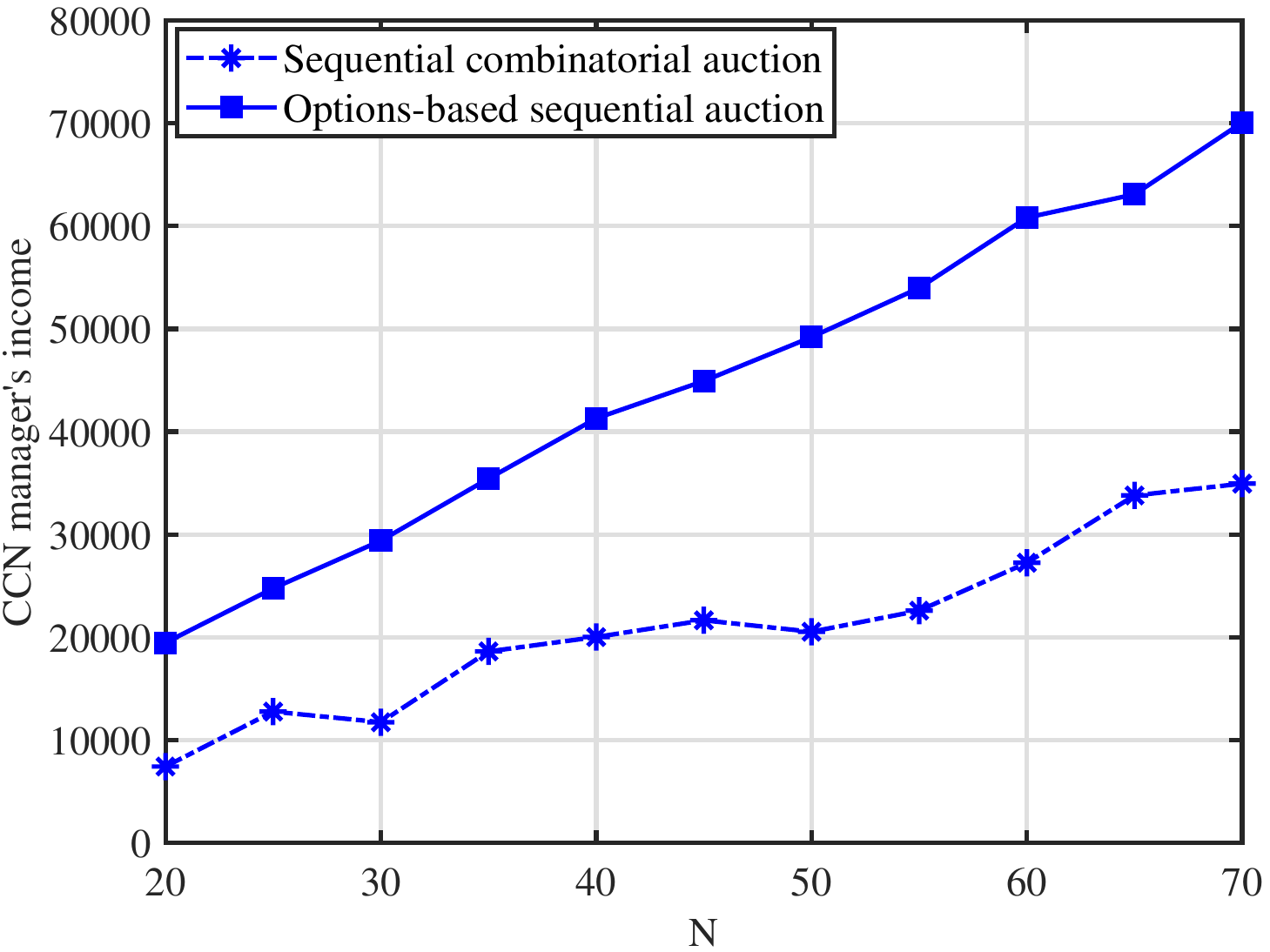}
	\caption{Comparison between the CCN manager's income upon utilizing sequential combinatorial auction and the second-price OBSA for various number of $\mbox{\small PAs}$ in the system ($\mbox{\small N}$).\label{sim:10}}
	\endminipage
	\quad
	\minipage{5.5cm}
		\vspace{-1mm}
	\includegraphics[width=1.0\linewidth, height=0.7\linewidth]{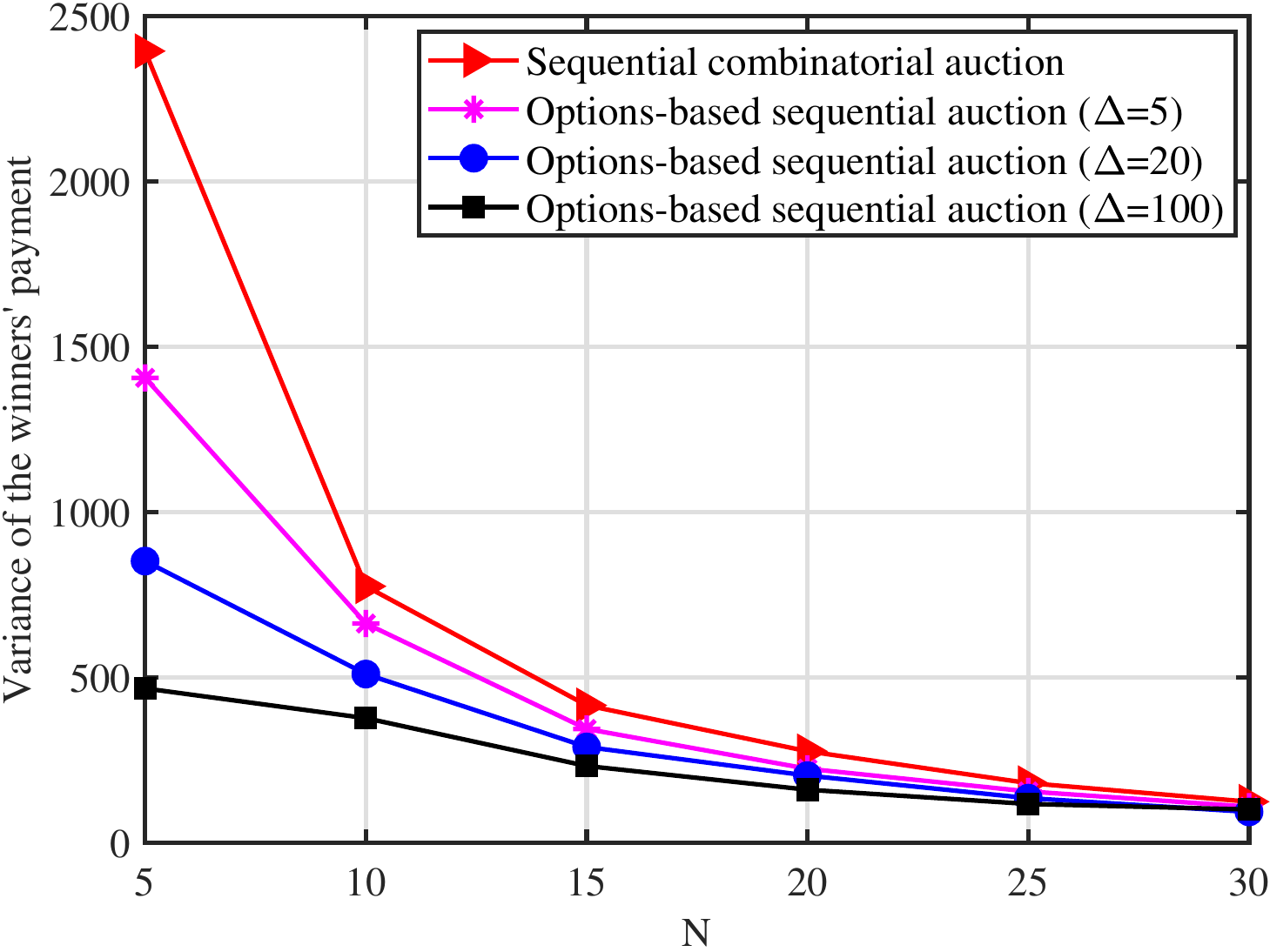}
\caption{Comparison between the variance of the winners' payment in the second-price OBSA vs. sequential combinatorial auction for various number of $\mbox{\small PAs}$ ($\mbox{\small N}$) and different residual patience times ($\Delta$).\label{sim:8}}
	\endminipage
	\quad
	\minipage{5.5cm}
			\vspace{-8.15mm}
	\includegraphics[width=1.0\linewidth, height=0.7\linewidth]{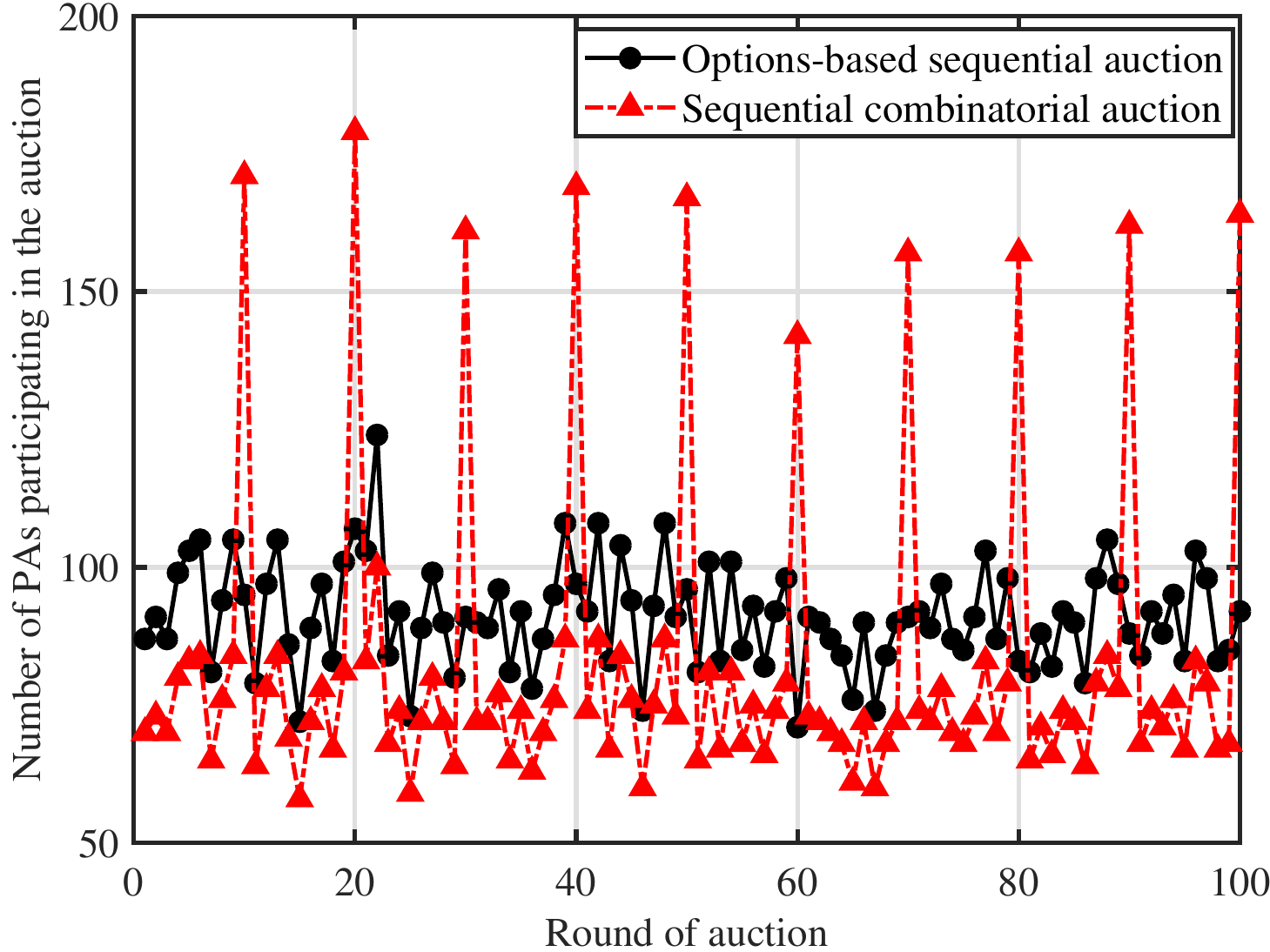}
\caption{Comparison between the number of participant PAs in the second-price OBSA vs. sequential combinatorial auction.\label{sim:9}}
	\endminipage
\end{figure*}
\bibliographystyle{IEEEtran}
\bibliography{BibTCC}
\begin{IEEEbiographynophoto}{Seyyedali Hosseinalipour} (S'17)
 received the B.S. degree in Electrical Engineering from Amirkabir University
of Technology (Tehran Polytechnic), Tehran, Iran in 2015. He is currently pursuing a Ph.D. degree in
the Department of Electrical and Computer Engineering at North Carolina State University, Raleigh, NC, USA. His research interests include analysis of wireless networks, resource allocation and load balancing for cloud networks and resource allocation and task scheduling for vehicular ad-hoc networks.
\end{IEEEbiographynophoto}
\begin{IEEEbiographynophoto}{Huaiyu Dai} (F’17)
  received the B.E. and M.S. degrees in electrical engineering from Tsinghua
University, Beijing, China, in 1996 and 1998, respectively, and the Ph.D. degree in electrical
engineering from Princeton University, Princeton, NJ in 2002.

He was with Bell Labs, Lucent Technologies, Holmdel, NJ, in summer 2000, and with AT\&T Labs-Research, Middletown, NJ, in summer 2001. He is currently a Professor of Electrical and Computer
Engineering with NC State University, Raleigh. His research interests are in the general areas of
communication systems and networks, advanced signal processing for digital communications, and
communication theory and information theory. His current research focuses on networked information
processing and crosslayer design in wireless networks, cognitive radio networks, network security, and
associated information-theoretic and computation-theoretic analysis.

He has served as an editor of IEEE Transactions on Communications, IEEE Transactions on Signal
Processing, and IEEE Transactions on Wireless Communications. Currently he is an Area Editor in
charge of wireless communications for IEEE Transactions on Communications. He co-edited two
special issues of EURASIP journals on distributed signal processing techniques for wireless sensor
networks, and on multiuser information theory and related applications, respectively. He co-chaired the
Signal Processing for Communications Symposium of IEEE Globecom 2013, the Communications
Theory Symposium of IEEE ICC 2014, and the Wireless Communications Symposium of IEEE
Globecom 2014. He was a co-recipient of best paper awards at 2010 IEEE International Conference on
Mobile Ad-hoc and Sensor Systems (MASS 2010), 2016 IEEE INFOCOM BIGSECURITY
Workshop, and 2017 IEEE International Conference on Communications (ICC 2017).
\end{IEEEbiographynophoto}
\end{document}